%
\documentclass[11pt]{article}

\usepackage{amsmath,amsthm,amssymb,amsfonts,verbatim}
\usepackage{fullpage}

\usepackage[dvipsnames,usenames]{color}

 \providecommand{\F}{\mathbb{F}}

\parskip=1ex

\newtheorem{theorem}{Theorem}[section]
\newtheorem{cor}[theorem]{Corollary}
\newtheorem{prop}[theorem]{Proposition}
\newtheorem{lem}[theorem]{Lemma}

\newtheorem{ex}[theorem]{Example}

\theoremstyle{remark}
\newtheorem{rmk}{Remark}

\renewcommand{\epsilon}{\varepsilon}
\renewcommand{\le}{\leqslant}
\renewcommand{\ge}{\geqslant}

\def\proj{\mathrm{proj}}

\def\Im{\mathrm{Im}}


\def\ZZ{\mathbb{Z}}

\def\PP{\mathbb{P}}

\def \mL {\mathcal{L}}

\def \mP {\mathcal{P}}

\def \mS {\mathcal{S}}

\def\Pin{{P_{\infty}}}
\def\Tr{{\rm Tr}}

\def\Span{{\rm Span}}

\newcommand{\Ga}{\alpha}
\newcommand{\Gb}{\beta}

\newcommand{\Gg}{\gamma}     
\newcommand{\Gd}{\delta}     
\newcommand{\Ge}{\epsilon}

\newcommand{\Gl}{\lambda}    
\newcommand{\Go}{\omega}     \newcommand{\GO}{\Omega}
     
\newcommand{\Gt}{\tau}

\def \bc {{\bf c}}

\def \fb {{\mathfrak{b}}}

\def\supp {{\rm supp }}
\def \res {{\rm res}}

\def\g{{\mathfrak{g}}}
\def\proj{{\rm proj}}


\begin{document}

\title{{\bf Efficiently repairing algebraic geometry  codes }}
\author{Lingfei Jin\thanks{Shanghai Key Laboratory of Intelligent Information Processing, School of Computer Science, Fudan University, Shanghai 200433, China. {\tt lfjin@fudan.edu.cn}.}
\and Yuan Luo \thanks{Department of Computer Sciences and Engineering, Shanghai Jiaotong University, Shanghai 200240, P. R. China. {\tt luoyuan@cs.sjtu.edu.cn}.}
\and Chaoping Xing\thanks{Division of Mathematical Sciences, School of Physical \&  Mathematical Sciences, Nanyang Technological University, Singapore. {\tt xingcp@ntu.edu.sg}}}
\date{}

\maketitle

\begin{abstract}
Minimum storage regenerating  codes have minimum storage of data in each node and therefore are maximal distance separable (MDS for short) codes. Thus, the number of nodes is upper bounded by $2^{\fb}$, where $\fb$ is the bits of data stored in each node. From both theoretical and practical points of view (see the details in Section 1), it is natural to consider regenerating codes that nearly have  minimum storage of data, and meanwhile the number of nodes is unbounded. One of the candidates for such regenerating  codes  is an algebraic geometry code.
In this paper, we generalize the repairing algorithm of Reed-Solomon codes given in \cite[STOC2016]{GW16} to  algebraic geometry codes and present an efficient repairing algorithm for arbitrary one-point algebraic geometry codes. By applying our repairing algorithm to the one-point algebraic geometry codes based on the Garcia-Stichtenoth tower, one can  repair a code of rate $1-\Ge$ and length $n$ over $\F_{q}$ with bandwidth $(n-1)(1-\Gt)\log q$ for any $\Ge=2^{(\Gt-1/2)\log q}$ with a real $\tau\in(0,1/2)$. In addition,  storage in each node for an algebraic geometry code is close to the minimum storage.  Due to nice structures of Hermitian curves, repairing of Hermitian codes is also investigated. As a result, we are able to show that algebraic geometry codes are regenerating codes with good parameters. An example  reveals that  Hermitian codes outperform  Reed-Solomon codes for certain parameters.\end{abstract}

\section{Introduction}
In a distributed storage system, a large file is encoded and distributed over many nodes. When
a few nodes fail, one should  be able to rebuild  replacement nodes efficiently by using information from the
remaining active nodes. The problem of recovering the failed nodes exactly is known as the exact repair problem.
The exact repair problem and regenerating codes were first introduced in \cite{DGWWR10}. It was shown in \cite{DGWWR10} that there is a trade-off between storage and repair bandwidth. Codes lying on this tradeoff are called
regenerating codes. There are two special cases of regenerating codes that are  interesting from the theoretical point of view. One is called minimum bandwidth regenerating (MBR for short) code  in which
the minimum repair bandwidth is needed to repair the failed nodes. The other case is called minimum storage regenerating (MSR for short) code that corresponds
to the minimum storage. We refer \cite{DRWC11} for an excellent survey.

\subsection{Definition}
Let us give a formal definition of regenerating codes. In this paper, $q$ is a prime power and $\F_q$ denotes  the finite field $q$ elements.

A subspace $C$ of $\F_q^n$ is a called a {\it strong $[n,m,d]$-regenerating code} with the secondary parameters $(\mathfrak{b}, M,B)$ if
\begin{itemize}
\item[(i)] $\mathfrak{b}=\log q$ and $M$  is the total amount of data stored, i.e., $M=\log|C|$, where the logarithm is of base $2$;
\item[(ii)] given any codeword $\bc=(c_1,c_2,\dots,c_n)$  of $C$ and a coordinate $c_j\in\F_q$, any set $\{c_i\}_{i\in I}$  of coordinates with $I\subseteq [n]\setminus\{j\}$ and $|I|=m$ can repair $c_j$, where $[n]$ stands for the set $\{1,2,\dots,n\}$;
\item[(iii)] given any codeword $\bc=(c_1,c_2,\dots,c_n)$  of $C$ and a coordinate $c_j\in\F_q$, any set $\{c_i\}_{i\in I}$  of coordinates with $I\subseteq [n]\setminus\{j\}$ and $|I|=d$, from each of which only $B/d$ bits are  downloaded, can recover $c_j$.
\end{itemize}
The above definition requires downloading $B/d$ bits equally from each of $d$ coordinates. If we replace this condition by the condition that the total downloaded bits from the $d$ coordinates is bandwidth $B$, then the code $C$ is called a {\it weak $[n,m,d]$-regenerating code} with the secondary parameters $(\fb, M,B)$.

\subsection{Motivation}
In literature, most of researchers focused on either MSR or MBR codes. As MSR codes are MDS codes, the number of nodes is upper bounded by  $2^{\fb}$, where $\fb$ is the bits of data stored in each node.  Although the number $2^{\fb}$ of nodes is already big enough in current network, it is of theoretical interest to  study regenerating codes with unbounded number of nodes and near minimum storage.

From practical point of view, a linear secret sharing scheme (LSSS for short) can be viewed as an $[n,m,d]$-regenerating  code in which a secret is  the data in the erased node. The current LSSS requires to recover this missing data from any $m$ active nodes by downloading the whole data in each of the $m$ nodes. By applying an $[n,m,d]$-regenerating  code instead of classical linear code, we are able to recover the secret by obtaining part of data from any $d$ nodes. Thus, a regenerating  code reduces the total downloading bandwidth for an LSSS. For many applications of LSSS,  the number $n-1$ of players is much bigger than $2^{\fb}$ (see \cite{CCCX09}). Actually, for some of these applications we fix the alphabet size $q$ and let the number of players tend to $\infty$. In this circumstance, MSR codes are not suitable. By applying algebraic geometry codes, the number of players is unbounded, while storage of data in each node is nearly minimum. More  precisely speaking, for an algebraic geometry code, the data stored in each node is $\fb$, while an MSR code with the same length $n$ and rate $R$ requires storage approximately equal to $\frac{R}{R+2^{-\fb}}\fb$.

\subsection{Known results and parameter regime}

Due to wide applications of Reed-Solomon (RS for short) codes, it is natural to consider RS codes for MSR codes.  However, as noted in \cite{GW16}, the traditional approach
with Reed-Solomon codes is not a good idea for the exact repair problem. In fact, we know several (non-RS) MDS codes which outperform the traditional Reed-Solomon approach \cite{RSK11,LTT16}.

On the other direction, Rashmi, Shah and  Kuma \cite{RSK11} made use of matrix-product to construct a class of (non-RS) MSR codes for rate up to $1/2$. After this work, people have been working on constructions of MSR codes with rate bigger than $1/2$ \cite{LTT16,CHL11,GTC14,TWB13}.

On the other hand, due to wide range of applications, it is still interesting to study some important existing codes such as Reed-Solomon codes and see what these codes can achieve in the scenario of the exact repairing. Guruswami and Wootters \cite{GW16} first gave a clever local repairing algorithm for Reed-Solomon codes. They showed that one requires less than $\log |C|$ bits in order to exactly repair a failed node. For instance, it shows in \cite{GW16} that the bandwidth is about $(n-1)/r$ if the rate is at most $1-1/r$. In particular, for a high rate MDS code a tight bound $O(n)$ on bandwidth is obtained, where $n$ is the length of the code.

For an $[n,m,d]$-MSR code with the secondary parameters $(\fb, M,B)$, we have $B=m\Ga$ and the  bandwidth must obey
\begin{equation}\label{eq:1.1}B\ge \frac{d\fb}{d-m+1}.\end{equation}
Determine lower bound on repairing bandwidth is a fundamental problem for regenerating codes \cite{DGWWR10,WDR07}.
The bandwidth given in \eqref{eq:1.1}   holds for functional repair as well and is  only possible when $\fb$ is sufficiently large  compared with $d-m$. More precisely  speaking, it was shown in  \cite{SR10,CJM+13} that the bound \eqref{eq:1.1} can be achieved when $\log q$ is exponential in $n-m$. Moreover, it was proved in \cite{GTC14} that the bound \eqref{eq:1.1} is not achievable if $\log q$ is less than $m^2$.  

In this paper, we consider  $[n,m,d]$-regenerating codes with the secondary parameters $(\fb, M,B)$ (not necessarily MDS codes) and the regime where $\fb$ is much smaller than $d-m$. In fact, in our setting, $\log q=\fb$ is a constant and $d-m$ tends to $\infty$. Thus, the bound \eqref{eq:1.1} is a constant and  not achievable by the following naive bound \eqref{eq:1.2}.

\subsection{LSSS}
We refer to \cite{CCCX09} for details in this section.

Let $C$ be a $q$-ary linear code in $\F_q^n$. For a codeword $\bc=(c_1,\dots, c_n)$ and $i\in[n]$, let the $i$th coordinate $c_i$  be the secret and let the rest of $n-1$ coordinates be the shares. $C$ is said to have $r$-reconstruction if, for any $I\subseteq[n]\setminus\{i\}$ with $|I|=r$, the projection $\proj_I(\bc)$ of $\bc$ at $I$ uniquely determines $\bc$ with probability $1$. $C$ is said to have $s$-privacy if, for any $I\subseteq[n]\setminus\{i\}$ with $|I|= s$,  the projection  $\proj_{I\cup\{i\}}(\bc)$ is uniformly distributed in $\F_q^{s+1}$. It was shown in \cite[Theorem 1]{CCCX09} that a linear code $C$ with distance $\delta$ and dual distance $\Gd^{\perp}$ has $(n-\Gd+1)$-reconstruction and $(\Gd^{\perp}-1)$-privacy. This implies that the bandwidth $B$ to recover $c_i$ must satisfy
\begin{equation}\label{eq:1.2}B\ge \Gd^{\perp}-2+\fb.\end{equation}
To prove the naive bound \eqref{eq:1.2}, assume that only a single bit from any $\Gd^{\perp}-2$ nodes are downloaded. As $C$ has $(\Gd^{\perp}-1)$-privacy, the remaining coordinate is free and hence $\fb=\log q$ bits is required.  In the case where $\fb$ is a constant and the number $n$ of nodes tends to $\infty$, the bound \eqref{eq:1.1} is a constant and the bound \eqref{eq:1.2} is proportional to $n$. This implies that
the bound \eqref{eq:1.2} is much better than the bound \eqref{eq:1.1} in this circumstance.

We make use of privacy to deduce the bound \eqref{eq:1.2}. However, in order to repair the failed node, i.e., reconstruct the failure node, we should consider reconstruction instead.  This means that one should download at least one bit from each of $(n-\Gd+1)$ nodes and hence the repairing bandwidth satisfies
\begin{equation}\label{eq:1.3}B\ge n-\Gd+1.\end{equation}
For MDS codes, we have $n-\Gd+1=\Gd^{\perp}-1$, i.e, the gap between the bounds \eqref{eq:1.2} and \eqref{eq:1.3} is $\log q-1$. This is a constant for a fixed $q$ and negligible when $n$ is large. However, for fixed $q$ and sufficiently large $n$, the gap between $n-\Gd+1$ and $\Gd^{\perp}-1$ is proportional to $n$. This means that in this case the bound \eqref{eq:1.2} is even worse than the bound \eqref{eq:1.3}.

\subsection{Our result }\label{subsec:com}
MSR codes achieve minimum storage with the best possible bandwidth, while MBR codes achieve minimum bandwidth with the best possible storage. A natural approach is to look into something between.

 A good class of candidates is algebraic geometry (AG for short) codes. First of all, algebraic geometry codes have both good privacy and reconstruction and their alphabet sizes can be very small. Secondly, these codes almost achieve minimum storage. Thirdly, algebraic geometry codes are a natural generalization of Reed-Solomon codes. The main purpose of this paper is to investigate efficient repairing of algebraic geometry codes.

We first provide an efficient repairing algorithm for arbitrary algebraic geometry codes and it shows that an algebraic geometry code is a good  regenerating code.  Applying this algorithm to the algebraic geometry codes based on the Garcia-Stichtenoth tower, we obtain the following result by Corollary \ref{cor:3.5}.
 \begin{quote}
 {\bf Main result:} {\it
 Let $q$ be a prime power and let $\Ge=2^{(\tau-1/2)\fb}\ge \frac2{\sqrt{q}-1}$ for some real $\Gt\in(0,0.5)$, where $\fb=\log q$. Then, for infinitely many $n$, there exist one-point algebraic geometry codes $C$ of rate $1-\Ge$ in $\F_q^n$ such that every coordinate of a given codeword of $C$ can be repaired by the remaining $n-1$ coordinates with bandwidth $(n-1)(1-\Gt)\log q$. 

 In addition, storage in each node for our code is close to the minimum storage. More precisely speaking, our code requires storage $\mathfrak{b}$, while an MSR code with the same length and rate requires storage  $\frac {1-\Ge}{1-\Ge+2^{-\mathfrak{b}/2}}\mathfrak{b}$.}
 \end{quote}
On the other hand, it is easy to see that one requires bandwidth $(1-\Ge)n\log q$ in order to recover the whole codeword. This bandwidth is strictly greater than $(n-1)(1-\Gt)\log q$.

\begin{rmk} Note that our repair in the above main result is a linear repair over $\F_p$ (see \cite[Definition 4]{GW16}). By mimicking the proof of Theorem 3 of \cite{GW16}, one can show that for a $q$-ary code $C$ with dual distance $\Gd^{\perp}$, any linear repair over $\F_p$ has repairing bandwidth $B$ satisfying
\begin{equation}\label{eq:3.10a}B\ge (n-1)\log\left(\frac{n-1}{\Gd^{\perp}-1}\right).\end{equation}
If  applying the above bound to algebraic geometry codes, we get
\begin{equation}\label{eq:3.10b}B\ge (n-1)\log\left(\frac{n-1}{n-k+\g-2}\right),\end{equation}
where $k$ is the dimension of the code and $\g$ is the genus of the underlying curve. Since our codes are based on the Garcia-Stichtenoth tower (see Section 3), we have $\frac{\g}n\rightarrow\frac1{\sqrt{q}-1}$. By setting $n-k=\Ge n$ with $\Ge=2^{(\tau-1/2)\fb}\ge \frac2{\sqrt{q}-1}$ for some real $\Gt\in(0,0.5)$, the bound \eqref{eq:3.10b} gives
\begin{equation}\label{eq:3.10c}B\gtrapprox (n-1)\left(\frac12-\Gt\right)\log q.\end{equation}

Note that the bandwidth given in our main result is roughly twice of the bound \eqref{eq:3.10c}. This means that one can not do much better than our main result for linear repairing.
\end{rmk}
\subsection{Comparison with RS codes}
In \cite{GW16}, it was shown that one can repair an $n$-ary Reed-Solomon code of length $n$ and rate $1-1/p$ with bandwidth $(n-1)\log p$, where $n$ is a prime power and $\F_p$ is a subfield of $\F_n$.

Let us consider both AG codes and RS codes over the same base field $\F_q$ and  put $\Ge=\frac1p=2^{(\tau-1/2)\log n}$ for some real $\Gt\in(0,0.5)$, then one can repair a $q$-ary Reed-Solomon code of length $n=q$ and rate $1-1/p$ with bandwidth $(n-1)(\frac12-\Gt)\log n=(n-1)(\frac12-\Gt)\log q$.
Hence, an AG code with rate $1-\Ge$ has bandwidth roughly  twice the bandwidth required by RS codes with the same rate. However, the code length $n$ for the AG codes is unbounded, while the code length of the RS codes is upper bounded by $q$. This can be viewed as a trade-off between bandwidth and code length constraint.

If we compare   two classes of codes by fixing the same length $n$, i.e., AG codes are defined over a field field $\F_q$ for a constant $q$, while RS codes are defined over $\F_n$ (where $n$ is much bigger than $q$). We set $\Ge=\frac1p=2^{(\tau-1/2)\log q}$ and it turns out that with the same rate, RS codes have the bandwidth $(n-1)(\frac12-\Gt)\log q$. This implies that with the same number of nodes, AG codes can store smaller size of data with bandwidth at most twice the bandwidth required by RS codes with the same rate. Furthermore, AG codes are nearly MSR codes.

One may argue that,  if setting $p=2$, one obtains  RS codes of rate $1/2$ with repairing bandwidth $(n-1)$. In this case, we take $q=25$ and $\Gt=\frac 12-\frac1{2\log5}$. Then $\Ge=2^{(\Gt-1/2)\log25}=\frac 12$. Therefore, one get an AG code of rate $\frac 12$ and repairing bandwidth $(n-1)(\log 5+\frac 12)\approx2.828(n-1)$. Although the repairing bandwidth of our AG code is $2.8$ times of the RS codes, our code alphabet $q$ is a constant, i.e., $q=25$.

\subsection{Our approach}

Let us summarize the approach used in Guruswami-Wootters' paper \cite{GW16}  as follows. Fix a subfield $\F_p$ of $\F_q$ with $t=[\F_q:\F_p]$. Choose a basis $\zeta_1,\dots,\zeta_t$ of $\F_q$ over $\F_p$.
For each pair $(i,u)\in[n]\times[t]$, find a polynomial $h_{(i,u)}$ such that $\deg(h_{(i,u)})\le n-k-1$ and  $b_j^{(i)}=\dim_{\F_p}\Span_{\F_p}\{h_{(i,u)}(P_j):\; u=1,2,\dots,t\}$ satisfies $b_i^{(i)}=t$ and $b_j^{(i)}<t$ for all $1\le i\neq j\le n$. The function $h_{(i,u)}$ chosen in Guruswami-Wootters' paper \cite{GW16} is $\frac{\Tr(\zeta_u(x-a_i))}{x-a_i}$, where $\Tr$ is the trace function from $\F_q$ to $\F_p$ and $a_i$ is the evaluation point at position $i$. Assume that $(f(a_1),\dots,f(a_n))$ is the codeword stored in the $n$ nodes with $a_j\in\F_q$ and $\deg(f)\le k-1$.  Thus, one can download $\sum_{j\in[n]\setminus\{i\}}b_j^{(i)}\log p$ bits in total to recover $f(a_i)$.

To generalize the above idea from RS codes to AG codes, the choice of the function $h_{(i,u)}$ is the key part. One can not simply take $h_{(i,u)}$ to be $\frac{\Tr(\zeta_uz)}{z}$ for some function $z$ with $z(P_i)=0$. Otherwise, the  degree of the pole divisor of $\frac{\Tr(\zeta_uz)}{z}$ is $(p^{t-1}-1)\deg((z)_{\infty})$, where $(z)_{\infty}$ is the pole divisor of $z$. Thus, the number $(p^{t-1}-1)\deg((z)_{\infty})$ is too big to find a suitable  codeword in the dual code. Instead of the trace function, we choose a $p$-linearized function $L$ of degree $p^{l}$  and let $h_{(i,u)}=\frac{L(\zeta_uz)}{z}$ so that the pole divisor of $\frac{L(\zeta_uz)}{z}$ has degree $(p^l-1)\deg((z)_{\infty})$. By adjusting $l$, we can control this degree and find a suitable codeword in the dual code.

\subsection{Organization}
In Section \ref{sec:2}, we will introduce some preliminaries including dual basis of finite fields, function fields, algebraic geometry codes and their dual codes, etc. In Section \ref{sec:3}, we present  an efficient repair decoding algorithm for arbitrary one-point algebraic geometry codes and apply it to the algebraic geometry codes based on the Garcia-Stichtenoth tower. The  repairing of Reed-Solomon codes and Hermitian codes is discussed under the framework of algebraic geometry codes in the last section.
\section{Preliminaries}\label{sec:2}
In this section, we discuss some algebraic backgrounds that are needed for the next section.
\subsection{Background on finite fields}\label{subsec:2.1}
Throughout this paper, we assume that $\F_q$ is the finite field with even characteristic. We also assume that $\F_q/\F_p$ is a field extension with $[\F_q:\F_p]=t$. For an $\F_p$-basis $\{\zeta_1,\zeta_2,\dots,\zeta_t\}$ of $\F_q$, its dual basis is an $\F_p$-basis $\{\theta_1,\theta_2,\dots,\theta_t\}$ of $\F_q$ satisfying
\begin{equation}\label{eq:1}
\Tr(\zeta_i\theta_j)=\left\{\begin{array}{ll}
1&\mbox{if $i=j$}\\
0&\mbox{if $i\neq j$,}
\end{array}
\right.
\end{equation}
where $\Tr$ is the trace map from $\F_q$ to $\F_p$. By \cite[Section 3 of Chapter 2]{LN}, we know existence of dual basis for any given $\F_p$-basis. Throughout this paper, we will fix an $\F_p$-basis $\{\zeta_1,\zeta_2,\dots,\zeta_t\}$ and its dual $\{\theta_1,\theta_2,\dots,\theta_t\}$. For any element $\Ga\in\F_q$, let $\Ga=\sum_{i=1}^ta_i\theta_i$ with $a_i\in\F_p$. Then for any $j\in[n]$, $\Tr(\Ga\zeta_j)=\Tr\left(\sum_{i=1}^ta_i\theta_i\zeta_j\right)=\sum_{i=1}^ta_i\Tr(\theta_i\zeta_j)=a_j$. This implies that
\begin{equation}\label{eq:2}
\Ga=\sum_{j=1}^t\Tr(\Ga\zeta_j)\theta_j.
\end{equation}
The above trace representation will be used to calculate bandwidth.

For a set $\{\Gb_1,\Gb_2,\dots,\Gb_n\}$ of $\F_q$, we denote by $\Span_{\F_p}\{\Gb_1,\Gb_2,\dots,\Gb_n\}$ the $\F_p$-vector space generated by $\{\Gb_1,\Gb_2,\dots,\Gb_n\}$. Note that, for any nonzero element $\Ga\in\F_q$,  $\Span_{\F_p}\{\Gb_1,\Gb_2,\dots,\Gb_n\}$ and $\Span_{\F_p}\{\Ga\Gb_1,\Ga\Gb_2,\dots,\Ga\Gb_n\}$ have the same dimension over $\F_p$.

Let $V$ be an $\F_p$-subspace of $\F_q$. We define a $p$-linearized polynomial $L_V(x):=\prod_{\Ga\in V}(x-\Ga)$. Then $L_V(x)$ defines an $\F_p$-linear map from $\F_q$ to $\F_q$ given by $\Gb\mapsto L_V(\Gb)$. Apparently, the kernel of $L_V(x)$ is $V$ and hence the image space $\Im(L_V)$ has $\F_p$-dimension equal to $\log_p q-\dim_{\F_p}(V)$.

\subsection{Background on algebraic function fields}

Let $F$ be a function field  of genus $\g$ defined over $\F_q$ (we assume that $\F_q$ is the full constant field of $F$, i.e., the algebraic closure of $\F_q$ in $F$ is $\F_q$).  An element of $F$ is called a function.  The normalized discrete valuation corresponding to a place $P$ of $F$ is written as $\nu_P$. A place $P$ is said $\F_q$-rational if it has degree one. Let $\PP_F$ denote the set of places of $F$. A divisor $G$ of $F$ is of the form $\sum_{P\in \PP_F}m_pP$ with finitely many nonzero coefficients $m_P$. The degree of $G$ is defined by  $\deg(G)=\sum_{P\in\PP_F}m_p\deg(P)$. The support of $G$, denoted by  $\supp(G)$, is defined to be $\{P\in\PP_F: \; m_P\neq 0\}$. Thus $\supp(G)$ is a finite set. It is clear that all divisors form a free abelian group generated by $\PP_F$. For a nonzero function $f$, the principal devisor of $f$ is defined to be $(f):=\sum_{P\in\PP_F}\nu_P(f)P$. Then the degree of a principal divisor is zero. Two divisors $G_1$ and $G_2$ are are said equivalent if there exists a nonzero function $f$ such that $G_1=(f)+G_2$.

We denote by $\GO_F$ the set of differentials of $F$. All differentials of $F$ form a vector space of dimension one over $F$.  This means that if $t\in F$ satisfies $dt\neq0$, then $\GO_F=Fdt$. The degree of the divisor $(dt)$ is $2\g-2$. Hence, for every nonzero element $fdt$ with $f\in F\setminus\{0\}$, the degree of $(fdt)$ is $\deg(f)+\deg(dt)=2\g-2$. A divisor $(fdt)$ with $f\in F\setminus\{0\}$ is called a canonical divisor of $F$. It is clear that every  canonical divisor has degree $2\g-2$ and all canonical divisors are equivalent.

For a divisor $G$, one can define two spaces
\[\mL(G)=\{f\in F\setminus\{0\}:\quad (f)+G\geq0\}\cup\{0\}\]
and
\[\GO(G)=\{\Go\in\GO_F\setminus\{0\}:\; (\Go)\ge G\}\cup\{0\}.\]
Then both $\mL(G)$ and $ \GO(G)$ are finitely dimensional spaces over $\F_q$. Dimensions of $\mL(G)$ and $ \GO(G)$ are denoted by $\ell(G)$ and $i(G)$, respectively. One has the  identity $i(G)=\ell(K-G)$, where $K$ is a canonical divisor. Then the Riemann-Roch theorem says that
\[\ell(G)=\deg(G)-\g+1+i(G)=\deg(G)-\g+1+\ell(K-G),\]
where $K$ is a canonical divisor. Consequently $\ell(G)\ge \deg(G)-\g+1$ and the equality holds if $\deg(G)>2\g-2$. The reader may refer \cite{St93} for the details.

\subsection{Algebraic geometry codes}\label{subsec:2.3}
Let $F/\F_q$ be an algebraic function field. Assume that $F$ has $n$ distinct $\F_q$-rational places $\{ P_1,P_2,\dots,P_n\}$.  Denote by $\mP$ the set $\{P_1,P_2,\dots,P_n\}$. Let $G$ be a divisor of $F$ with $\supp(G)\cap\mP=\emptyset$.  Consider the  algebraic geometry  code defined by
Goppa \cite{G81}:
\begin{equation}
 C_L(G,\mP):=\{(f(P_1),f(P_2),\dots, f(P_n)):\; f\in\mL(G)\}.
\end{equation}
 The code $C_L(G,\mP)$ is called a functional algebraic geometry code.

Now we define another code $C_\Omega(G,\mP)$ by
\begin{equation}
C_\Omega(G,\mP)=\left\{(\res_{P_1}(\Go),\dots,\res_{P_n}(\Go)):\;\Go\in
\Omega\left(G-\sum_{i=1}^nP_i\right)\right\},\end{equation}
where $\res_{P_i}(\Go)$ stands for the residue of $\Go$ at $P_i$.

The code $C_{\GO}(G,\mP)$ is called a differential algebraic geometry code.  It was proved in \cite{G81} that
 $C_\Omega(G ,\mP)$ is the Euclidean dual of $C_L(G,\mP)$.

We have the following results.
\begin{prop}\label{prop:2.1} The dual code of $C_L(G,\mP)$ is $C_{\GO}(G,\mP)$ and both the codes have length $n$.  Furthermore, we have
\begin{itemize}
\item[{\rm (i)}] The dimension $k$ of $C_L(G,\mP)$ is $\ell(G)-\ell\left(G-\sum_{i=1}^nP_i\right)$ and the dimension $k^{\perp}$ of $C_{\GO}(G,\mP)$ is $i(G-\sum_{i=1}^nP_i)-i(G)$.
\item[{\rm (ii)}] If $\deg(G)$ satisfies $\deg(G)<n$, then $k=\ell(G)\geq \deg(G)-\g+1.$
\item[{\rm (iii)}] If $\deg(G)$ satisfies $\deg(G)>2\g-2$, then $k^{\perp}=i(G-\sum_{i=1}^nP_i)\geq n-\deg(G)+\g-1.$
\item[{\rm (iv)}] If additionally $2\g-2<\deg(G)<n$, then $k=\deg(G)-\g+1$ and $k^{\perp}=n-\deg(G)+\g-1.$
\end{itemize}
\end{prop}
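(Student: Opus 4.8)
The plan is to exhibit both codes as images of explicit $\F_q$-linear maps and then read off their dimensions by rank--nullity together with the Riemann--Roch theorem. That both codes have length $n$ is immediate from their definitions as subsets of $\F_q^n$, and the assertion that $C_{\GO}(G,\mP)$ is the Euclidean dual of $C_L(G,\mP)$ is quoted directly from Goppa \cite{G81}; so it remains only to prove (i)--(iv).

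For (i), I would consider the evaluation map $\mathrm{ev}\colon\mL(G)\to\F_q^n$, $f\mapsto(f(P_1),\dots,f(P_n))$, which is well defined because $\supp(G)\cap\mP=\emptyset$ and whose image is $C_L(G,\mP)$ by definition. Its kernel is the set of $f\in\mL(G)$ vanishing at every $P_i$, i.e.\ exactly $\mL(G-\sum_{i=1}^nP_i)$, so $k=\ell(G)-\ell(G-\sum_{i=1}^nP_i)$. Dually, I would use the residue map $\res\colon\GO(G-\sum_{i=1}^nP_i)\to\F_q^n$, $\Go\mapsto(\res_{P_1}(\Go),\dots,\res_{P_n}(\Go))$, whose image is $C_{\GO}(G,\mP)$. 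Any $\Go$ in the domain satisfies $\nu_{P_i}(\Go)\ge-1$ for each $i$ (again since $\mP$ avoids $\supp(G)$), and $\res_{P_i}(\Go)=0$ precisely when $\nu_{P_i}(\Go)\ge0$; hence the kernel is $\GO(G)$ and $k^{\perp}=i(G-\sum_{i=1}^nP_i)-i(G)$.

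For (ii), the hypothesis $\deg(G)<n$ forces $\deg(G-\sum_{i=1}^nP_i)=\deg(G)-n<0$, so $\mL(G-\sum_{i=1}^nP_i)=\{0\}$ and $k=\ell(G)$; the bound $\ell(G)\ge\deg(G)-\g+1$ is the Riemann--Roch lower bound noted above. For (iii), if $\deg(G)>2\g-2$ then $\deg(K-G)<0$ for a canonical divisor $K$, so $i(G)=\ell(K-G)=0$ and $k^{\perp}=i(G-\sum_{i=1}^nP_i)$; applying Riemann--Roch to $G-\sum_{i=1}^nP_i$ and discarding the nonnegative term $\ell(G-\sum_{i=1}^nP_i)$ gives $i(G-\sum_{i=1}^nP_i)=\ell(G-\sum_{i=1}^nP_i)+n-\deg(G)+\g-1\ge n-\deg(G)+\g-1$. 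Part (iv) is the conjunction of (ii) and (iii): when $2\g-2<\deg(G)<n$ both correction terms $\ell(G-\sum_{i=1}^nP_i)$ and $i(G)$ vanish, and $i(G)=0$ turns the Riemann inequality for $G$ into the equality $\ell(G)=\deg(G)-\g+1$, yielding $k=\deg(G)-\g+1$ and $k^{\perp}=n-\deg(G)+\g-1$ (consistently with $k+k^{\perp}=n$).

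I do not anticipate a genuine obstacle. The only steps needing care are the two kernel identifications, especially the residue computation, where one uses that at a place avoiding $\supp(G)$ a differential in $\GO(G-\sum_{i=1}^nP_i)$ has at worst a simple pole and that such a pole is detected by a nonzero residue, so vanishing of all the residues at the $P_i$ is equivalent to membership in $\GO(G)$; everything else is just bookkeeping of which degree hypothesis annihilates which term in Riemann--Roch.
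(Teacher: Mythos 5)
Your proof is correct and is exactly the standard argument (evaluation and residue maps, rank--nullity, Riemann--Roch) that the paper itself omits, deferring instead to Goppa and Stichtenoth; the kernel identifications, including the point that a simple pole at a rational place is detected by a nonzero residue, are handled properly. Nothing to add.
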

 Note that the dimension of $C_L(G,\mP)$ is in general less than the degree $\deg(G)$ of $G$.
Again the reader may refer \cite{St93} for the details.

\subsection{Dual codes of algebraic geometry codes}
From subsection \ref{subsec:2.3}, we know that the dual code of $C_L(G,\mP)$ is $C_\Omega(G,\mP)$. In this subsection, we are going to show that the dual code $C_\Omega(G,\mP)$ of $C_L(G,\mP)$ contains a codeword with zero and nonzero coordinates at some specific positions.

\begin{prop}\label{prop:2.2} Let $G$ be a divisor satisfying $\deg(G)<d$ and $\supp(G)\cap\mP=\emptyset$. Then for any $i\in[n]$ and subset $S\subseteq[n]\setminus\{i\}$ with $|S|=d$, there exists a codeword $(\res_{P_1}(\Go),\dots,\res_{P_n}(\Go))\in C_{\GO}(G,\mP)$ such that $\res_{P_i}(\Go)\neq0$ and $\res_{P_j}(\Go)=0$ for all $j\in [n]\setminus(S\cup\{i\})$.
\end{prop}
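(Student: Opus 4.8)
The plan is to produce the desired differential by a dimension count inside the Riemann--Roch spaces $\mL(\cdot)$, exploiting the identification $i(A) = \ell(K-A)$. Fix $i \in [n]$ and $S \subseteq [n]\setminus\{i\}$ with $|S| = d$, and set $T = [n]\setminus(S\cup\{i\})$, so $|T| = n - d - 1$. We want a differential $\Go \in \GO(G - \sum_{j=1}^n P_j)$ — equivalently $\Go \in \GO\bigl(G - \sum_{j\in S\cup\{i\}} P_j\bigr)$ after we impose vanishing of residues on $T$ — with $\res_{P_i}(\Go)\neq 0$ and $\res_{P_j}(\Go) = 0$ for all $j \in T$. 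The residue-at-$P_j$ map is an $\F_q$-linear functional on the relevant differential space, so I would argue that the subspace cut out by the $|T|$ conditions $\res_{P_j}(\Go) = 0$ for $j\in T$ properly contains the subspace where additionally $\res_{P_i}(\Go) = 0$; any $\Go$ in the first space but not the second does the job.

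Concretely, the first space is $\GO\bigl(G - \sum_{j \in S\cup\{i\}} P_j\bigr)$: a differential with $(\Go) \ge G - \sum_{j=1}^n P_j$ that has zero residue at every $P_j$ with $j\in T$ in fact has $\nu_{P_j}(\Go)\ge 0$ at those places (a simple pole is the only way to have a nonzero residue among the $P_j$'s appearing with coefficient $-1$), hence satisfies $(\Go)\ge G - \sum_{j\in S\cup\{i\}}P_j$. Its dimension is $i\bigl(G - \sum_{j\in S\cup\{i\}}P_j\bigr) = \ell\bigl(K - G + \sum_{j\in S\cup\{i\}}P_j\bigr)$. The second space is $\GO\bigl(G - \sum_{j\in S}P_j\bigr)$, of dimension $\ell\bigl(K - G + \sum_{j\in S}P_j\bigr)$. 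So it suffices to show
\[
\ell\Bigl(K - G + \sum_{j\in S\cup\{i\}}P_j\Bigr) > \ell\Bigl(K - G + \sum_{j\in S}P_j\Bigr).
\]
Adding a single rational place to a divisor raises $\ell$ by at most $1$, and it raises it by exactly $1$ unless every function in $\mL\bigl(K - G + \sum_{j\in S\cup\{i\}}P_j\bigr)$ already lies in $\mL\bigl(K - G + \sum_{j\in S}P_j\bigr)$; equivalently, the inequality is strict precisely when $P_i$ is not a base point of $\mL\bigl(K - G + \sum_{j\in S\cup\{i\}}P_j\bigr)$, and a standard argument shows this follows once the degree of the divisor is large enough. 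Here $\deg\bigl(K - G + \sum_{j\in S\cup\{i\}}P_j\bigr) = (2\g-2) - \deg(G) + (d+1)$. Using $d \ge \deg(G) + 1$ (from the hypothesis $\deg(G) < d$, and since degrees are integers), this degree is at least $2\g - 2 - \deg(G) + \deg(G) + 2 = 2\g$, which exceeds $2\g-2$; by Riemann--Roch both $\ell$'s then equal $\deg - \g + 1$ with the two degrees differing by exactly $1$, so the strict inequality holds.

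The main obstacle — really the only subtlety — is making the translation between "a differential with zero residues at the $P_j$, $j\in T$" and "$\nu_{P_j}(\Go)\ge 0$ there" fully rigorous, i.e. checking that a differential lying in $\GO(G - \sum P_j)$ can have a nonzero residue at $P_j$ only if it has a simple pole there, and that killing the residue forces holomorphy at that place; this is where the local expansion of a differential and the fact that $\res_{P_j}$ depends only on the order-$(-1)$ term enter, together with the observation that $G$ and the $P_j$ have disjoint support so the coefficient of $P_j$ in $G - \sum P_j$ is exactly $-1$. Once that local bookkeeping is pinned down, the rest is the degree estimate above plus the elementary "one extra point raises $\ell$ by at most one, and by exactly one when the degree is large" lemma, both of which are immediate from Riemann--Roch as quoted in the preliminaries. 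I would also remark that this is essentially the AG-code analogue of the standard fact that the dual of an MDS-like code contains codewords with prescribed support, specialized to give the single nonzero coordinate at $i$.
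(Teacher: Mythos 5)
Your proposal is correct and follows essentially the same route as the paper: a Riemann--Roch dimension count showing that $\dim_{\F_q}\GO\bigl(G-\sum_{j\in S\cup\{i\}}P_j\bigr)$ exceeds $\dim_{\F_q}\GO\bigl(G-\sum_{j\in S}P_j\bigr)$ by exactly one (both relevant canonical-twist degrees exceed $2\g-2$), then taking any differential in the difference and reading off the residues. The only difference is that you spell out the local valuation/residue bookkeeping (simple pole at a rational place $\Leftrightarrow$ nonzero residue) that the paper leaves implicit.
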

\begin{proof}
As $i(G-\sum_{l\in S}P_l)=|S|-\deg(G)+2\g-2-\g+1=d-\deg(G)+\g-1$ and $i(G-\sum_{l\in S}P_l-P_i)=|S|+1-\deg(G)+2\g-2-\g+1=d-\deg(G)+\g$, there exists a differential $\Go\in\GO(G-\sum_{l\in S}P_l-P_i)\setminus\GO(G-\sum_{l\in S}P_l)$. This implies that $\res_{P_i}(\Go)\neq0$ and $\res_{P_j}(\Go)=0$ for all $j\in [n]\setminus(S\cup\{i\})$. Furthermore, we have $(\res_{P_1}(\Go),\dots,\res_{P_n}(\Go))\in C_{\GO}(G,\mP)$. The proof is completed.
\end{proof}

\subsection{Garcia-Sticthenoth tower}\label{subsec:2.5}
There are two asymptotically optimal towers introduced by Garcia-Sticthenoth \cite{GS95,GS96}. We adopt the tower given in \cite{GS96}.

Let $q$ be a perfect square. The tower $\{F_e\}_{e=1}^{\infty}$ is recursively defined as follows: $F_1=\F_q(x_1)$ is a rational function field with variable $x_1$ and $F_e=\F_q(x_1,\dots,x_e)$, where $x_i$ satisfy the recursive equations:
  \[x_{i+1}^{\sqrt{q}}+x_{i+1}=\frac{x_i^{\sqrt{q}}}{x_i^{\sqrt{q}-1}+1}\qquad \mbox{for $i=1,2,\dots,e-1$}.\]
A careful analysis shows that the genus $\g(F_e)\le q^{e/2}$. Moreover, for any  $\Ga\in\F_q\setminus\{ \Gb\in\F_q : \Gb^{\sqrt{q}} + \Gb = 0\}$, the zero place of $x_1-\Ga$ in the rational
function field $F_1 =\F_q(x_1)$ splits completely in all extensions $F_e/F_1$. The infinite place
of $F_1$ is totally ramified in all extensions $F_e/F_1$. Hence the number of rational places of
$F_e$ satisfies
\[N(F_e)\ge q^{e/2}(\sqrt{q}-1)+1.\]
Let us denote by $\Pin$ the unique place of $F_e$ lying on the infinite place of $F_1$ and denote by $\mP$ those places of $F_e$ lying on $x_1-\Ga$ for all $\Ga\in\F_q\setminus\{ \Gb\in\F_q : \Gb^{\sqrt{q}} + \Gb = 0\}$.

It was shown in \cite{Shum01} that one can construct a basis of $\mL((m-1)\Pin)$ in $O(m^3)$ operations of $\F_q$ and the algebraic geometry code $C_L(G,\mP)$ can be constructed in $O(m^3)$ operations of $\F_q$ as well if $n=O(m)$.
\section{Repairing AG codes}\label{sec:3}
In this section, we consider  repairing of algebraic geometry codes. As our repairing algorithm works well for one-point algebraic geometry codes, we only consider one-point algebraic geometry codes from now onwards. Let $F/\F_q$ be an algebraic function field.
 Assume that $F$ has $n+1$ distinct $\F_q$-rational places $\{\Pin, P_1,P_2,\dots,P_n\}$.  Denote by $\mP$ the set $\{P_1,P_2,\dots,P_n\}$.  Consider the algebraic geometry code $C_L((m-1)\Pin,\mP)$ defined in Subsection \ref{subsec:2.3}. Let  $\{\zeta_1,\zeta_2,\dots,\zeta_t\}$ be an $\F_p$-basis of $\F_q$ and let $\{\theta_1,\theta_2,\dots,\theta_t\}$ be its dual.
Let us first show that any $m$ coordinates can repair the whole codeword for $C_L((m-1)\Pin,\mP) $.
\begin{lem}\label{lem:3.1} For every codeword $(f(P_1),f(P_2),\dots,f(P_n))$ of $C_L((m-1)\Pin,\mP)$, every $i\in[n]$ and every subset $S$ of $[n]\setminus\{i\}$ of size $m$, $f(P_i)$ can be repaired by the set $\{f(P_j):\; j\in S\}$.
\end{lem}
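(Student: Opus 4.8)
The plan is to reduce the statement to the standard fact that $C_L((m-1)\Pin,\mP)$ has dimension $m$ and recovers messages from any $m$ coordinates, via its interaction with the dual code. First I would note that by Proposition \ref{prop:2.1}(ii) applied to $G=(m-1)\Pin$, assuming $\deg(G)=m-1<n$, we have $k=\ell((m-1)\Pin)\ge m-\g$, but more to the point, a codeword is determined by $f\in\mL((m-1)\Pin)$, and $f(P_i)$ is a fixed $\F_q$-linear functional of the coefficients of $f$ in any basis of $\mL((m-1)\Pin)$. So it suffices to show that knowing $\{f(P_j):j\in S\}$ pins down $f$ uniquely, i.e. that the evaluation map $\mL((m-1)\Pin)\to\F_q^{S}$, $f\mapsto(f(P_j))_{j\in S}$, is injective.

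The key step is injectivity of that evaluation map. Suppose $f\in\mL((m-1)\Pin)$ vanishes at all $P_j$, $j\in S$, with $|S|=m$. Then $f\in\mL((m-1)\Pin-\sum_{j\in S}P_j)$, a space attached to a divisor of degree $(m-1)-m=-1<0$, which forces $f=0$. Hence two functions in $\mL((m-1)\Pin)$ agreeing on $S$ must be equal, so $(f(P_i))_{i\in[n]}$ is determined by its restriction to $S$; in particular $f(P_i)$ is recovered. Concretely, one can solve the linear system: pick any basis $\phi_1,\dots,\phi_k$ of $\mL((m-1)\Pin)$, write $f=\sum_\ell c_\ell\phi_\ell$, and the $m\times k$ matrix $(\phi_\ell(P_j))_{j\in S,\ell}$ has rank $k$ by the injectivity just shown (note $k\le m$ since $\deg((m-1)\Pin)=m-1<n$ gives, by Proposition \ref{prop:2.1}(ii), $k=\ell((m-1)\Pin)$, and in fact $k\le\ell((m-1)\Pin)\le m$ by Riemann–Roch reasoning $\ell(G)\le\deg(G)+1$), so the $c_\ell$ and hence $f(P_i)$ are determined.

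An alternative, perhaps cleaner, route that foreshadows the later sections is the dual-code argument: by Proposition \ref{prop:2.2} with $G=(m-1)\Pin$ and $d=m$ (valid since $\deg(G)=m-1<m=d$), for the given $i$ and any $S\subseteq[n]\setminus\{i\}$ with $|S|=m$, there is a differential $\Go$ with $\res_{P_i}(\Go)\ne0$ and $\res_{P_j}(\Go)=0$ for all $j\notin S\cup\{i\}$, giving a dual codeword supported inside $S\cup\{i\}$ with nonzero $i$-th entry. Pairing this dual codeword against the codeword $(f(P_1),\dots,f(P_n))$ yields $\res_{P_i}(\Go)f(P_i)+\sum_{j\in S}\res_{P_j}(\Go)f(P_j)=0$, so $f(P_i)=-\res_{P_i}(\Go)^{-1}\sum_{j\in S}\res_{P_j}(\Go)f(P_j)$, an explicit repair formula from $\{f(P_j):j\in S\}$.

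I expect the only real subtlety to be bookkeeping of the degree hypotheses: one must implicitly assume $m-1<n$ (so that $G=(m-1)\Pin$ is meaningfully below the code length and Proposition \ref{prop:2.1}(ii)/Proposition \ref{prop:2.2} apply), and this is harmless in the regime of interest where the dimension, and hence $m$, is much smaller than $n$. Everything else is routine: the heart is the observation $\deg\bigl((m-1)\Pin-\sum_{j\in S}P_j\bigr)<0$, or equivalently the existence of the dual codeword from Proposition \ref{prop:2.2}, and the rest is linear algebra over $\F_q$.
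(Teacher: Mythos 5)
Your second (dual-code) argument is exactly the paper's proof: apply Proposition \ref{prop:2.2} with $G=(m-1)\Pin$ and $d=m$ to get a dual codeword of $C_{\GO}((m-1)\Pin,\mP)$ supported in $S\cup\{i\}$ with nonzero $i$-th entry, and pair it with $(f(P_1),\dots,f(P_n))$ to solve for $f(P_i)$. Your primary injectivity argument (a function of $\mL((m-1)\Pin)$ vanishing on $S$ lies in a negative-degree Riemann--Roch space, hence is zero) is also correct and is essentially the alternative the paper itself notes right after the proof, namely that the code has $m$-reconstruction because its minimum distance is at least $n-m+1$.
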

\begin{proof} By Proposition \ref{prop:2.2}, one can find a differential $\Go\in \GO((m-1)\Pin-\sum_{j\in S}P_j-P_i)\setminus\GO((m-1)\Pin-\sum_{j\in S}P_j)$. This implies that $\res_{P_i}(\Go)\neq0$ and $\res_{P_u}(\Go)=0$ for all $u\in[n]\setminus (S\cup\{i\})$. As $(\res_{P_1}(\Go),\dots,\res_{P_n}(\Go))$ belongs to the dual code  $C_{\GO}((m-1)\Pin,\mP)$ of  $C_{L}((m-1)\Pin,\mP)$, we have
\[\res_{P_i}(\Go)f(P_i)=-\sum_{j\in[n]\setminus{i}}\res_{P_j}(\Go)f(P_j)=-\sum_{j\in S}\res_{P_j}(\Go)f(P_j).\]
The desired result follows since $\res_{P_i}(\Go)\neq0$.
\end{proof}
The above lemma shows that one can repair $f(P_i)$ by  downloading the whole data from any other $m$ nodes. Lemma \ref{lem:3.1} also follows from that fact that $C_L((m-1)\Pin,\mP) $ has $(n-\Gd+1)$-reconstruction, where $\Gd$ is the minimum distance of $C_L((m-1)\Pin,\mP) $ and hence $n-\Gd\ge n-(n-m+1)+1=m$.

 The next lemma  shows that we can download partial data from any $d$ nodes to repair $f(P_i)$.

\begin{lem}\label{lem:3.2} Let $m,r,d$ be  positive integers satisfying $m\le d-r$.  Assume that for each pair $(i,u)\in[n]\times[t]$, there exists a function $h_{(i,u)}\in \mL(r\Pin)$ such that $h_{(i,u)}(P_i)=\zeta_u$. Then  $C_L((m-1)\Pin,\mP)$ is a weak $[n, m, d]$-regenerating code having the secondary parameters $(\mathfrak{b}:=\log q, M:=\ell((m-1)\Pin)\log q, B)$ with bandwidth
$B=\max_{i\in[n], S\subseteq [n]\setminus\{ i\}, |S|=d}\sum_{j\in S}b_j^{(i)}\log p$, where
\begin{equation}
b_j^{(i)}=\dim_{\F_p}\Span_{\F_p}\{h_{(i,u)}(P_j):\; u=1,2,\dots,t\}.
\end{equation}
\end{lem}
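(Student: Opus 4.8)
The plan is to set up, for a fixed failed node $P_i$ and a fixed repair set $S$ of size $d$, the linear system that recovers $f(P_i)$ from partial information downloaded from the nodes $\{P_j : j \in S\}$, exactly mimicking the Guruswami--Wootters scheme but now inside the function field $F$. First I would fix $(i,u) \in [n]\times[t]$ and observe that, because $h_{(i,u)} \in \mL(r\Pin)$ has a pole only at $\Pin$ of order at most $r$, and the codeword comes from $f \in \mL((m-1)\Pin)$, the product $h_{(i,u)}\cdot$(a suitable differential) will land in the relevant differential space provided the pole orders at $\Pin$ add up correctly. Concretely, by Proposition \ref{prop:2.2} applied with the divisor $G = (m-1+r)\Pin$ (which has degree $m-1+r \le d-1 < d$ since $m \le d-r$), for the chosen $i$ and $S$ there is a differential $\Go \in \GO\left((m-1+r)\Pin - \sum_{j\in S}P_j - P_i\right) \setminus \GO\left((m-1+r)\Pin - \sum_{j\in S}P_j\right)$, so $\res_{P_i}(\Go) \ne 0$ and $\res_{P_j}(\Go) = 0$ for $j \in [n]\setminus(S\cup\{i\})$.

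Next I would multiply this $\Go$ by $h_{(i,u)}$. Since $h_{(i,u)}$ has no poles at the $P_j$'s and pole order $\le r$ at $\Pin$, the differential $h_{(i,u)}\Go$ satisfies $(h_{(i,u)}\Go) \ge (m-1)\Pin - \sum_{j\in S}P_j - P_i$, hence lies in $\GO\left((m-1)\Pin - \sum_{j=1}^n P_j\right)$ after noting the $P_j$ for $j\notin S\cup\{i\}$ cause no trouble; therefore its residue vector $(\res_{P_1}(h_{(i,u)}\Go),\dots,\res_{P_n}(h_{(i,u)}\Go))$ is a codeword of the dual code $C_\Omega((m-1)\Pin,\mP)$. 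Pairing this dual codeword with the codeword $(f(P_1),\dots,f(P_n))$ gives
\[
\sum_{j=1}^n f(P_j)\,\res_{P_j}(h_{(i,u)}\Go) = 0 .
\]
Now I use that residues are local: $\res_{P_j}(h_{(i,u)}\Go) = h_{(i,u)}(P_j)\,\res_{P_j}(\Go)$ whenever $\Go$ has at most a simple pole at $P_j$ (which holds for $j \in S\cup\{i\}$ by the choice of $\Go$) and is $0$ for the other $j$. So the identity becomes
\[
h_{(i,u)}(P_i)\,\res_{P_i}(\Go)\,f(P_i) = -\sum_{j\in S} h_{(i,u)}(P_j)\,\res_{P_j}(\Go)\,f(P_j),
\]
and since $h_{(i,u)}(P_i) = \zeta_u$ and $\res_{P_i}(\Go) \ne 0$, the left side is $\zeta_u\,\res_{P_i}(\Go)\,f(P_i)$. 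Taking traces $\Tr\colon \F_q \to \F_p$ of both sides and letting $u$ range over $1,\dots,t$, the values $\Tr\big(\zeta_u \res_{P_i}(\Go) f(P_i)\big)$ for $u = 1,\dots,t$ determine $\res_{P_i}(\Go)f(P_i)$ completely via the trace representation \eqref{eq:2}, hence determine $f(P_i)$. On the right side, each $\Tr\big(h_{(i,u)}(P_j)\res_{P_j}(\Go)f(P_j)\big)$ is an $\F_p$-linear functional of the single field element $\res_{P_j}(\Go)f(P_j) \in \F_q$; as $u$ varies, these functionals span an $\F_p$-space whose dimension is exactly $\dim_{\F_p}\Span_{\F_p}\{h_{(i,u)}(P_j) : u \in [t]\} = b_j^{(i)}$ (using that multiplying the set $\{h_{(i,u)}(P_j)\}_u$ by the nonzero constant $\res_{P_j}(\Go)$ does not change the dimension, as noted after \eqref{eq:2}). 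So it suffices to download, from node $j$, the $b_j^{(i)}$ bits $\log p$ worth of $\F_p$-values of these functionals at $\res_{P_j}(\Go)f(P_j)$ — equivalently $b_j^{(i)}\log p$ bits — and the node-$i$ value is then reconstructed by solving the above linear system over $\F_p$. Summing over $j\in S$ and maximizing over $i$ and $S$ yields the claimed bandwidth $B$, and $M = \ell((m-1)\Pin)\log q$ is the dimension times $\log q$; the first condition of a (weak) regenerating code, $\fb = \log q$, is immediate. This shows $C_L((m-1)\Pin,\mP)$ is a weak $[n,m,d]$-regenerating code with the stated secondary parameters (the $m$-reconstruction part being exactly Lemma \ref{lem:3.1}).

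The step I expect to require the most care is the bookkeeping of pole and zero orders at $\Pin$ to guarantee that $h_{(i,u)}\Go$ genuinely lands in $\GO\left((m-1)\Pin - \sum_{j=1}^n P_j\right)$ — one must check that the pole of $h_{(i,u)}$ at $\Pin$ (order $\le r$) is absorbed by the zero of $\Go$ at $\Pin$ (order $\ge m-1+r$), so that $h_{(i,u)}\Go$ has a zero of order $\ge m-1$ there, and simultaneously that $\Go$'s prescribed simple poles at $P_i$ and at the $P_j$, $j\in S$, are exactly what makes the local residue identity $\res_{P_j}(h_{(i,u)}\Go) = h_{(i,u)}(P_j)\res_{P_j}(\Go)$ valid (a simple pole is needed for the residue to factor through evaluation). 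The constraint $m \le d - r$ is precisely what makes $\deg\big((m-1+r)\Pin\big) = m-1+r \le d-1 < d$, so that Proposition \ref{prop:2.2} applies with room to spare; I would flag this inequality explicitly. Everything else is routine: the trace identity \eqref{eq:2}, invariance of $\F_p$-dimension under scaling by a nonzero constant, and duality of $C_L$ and $C_\Omega$ from Proposition \ref{prop:2.1}.
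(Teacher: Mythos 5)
Your proposal is correct and follows essentially the same route as the paper: apply Proposition \ref{prop:2.2} with $G=(m+r-1)\Pin$ (using $m\le d-r$ so $\deg G<d$), download the traces $\Tr\bigl(h_{(i,v)}(P_j)\res_{P_j}(\Go)f(P_j)\bigr)$ for a basis-index set of size $b_j^{(i)}$ at each $j\in S$, and recover $f(P_i)$ via the orthogonality identity and the dual-basis representation \eqref{eq:2}. The only differences are cosmetic: the paper obtains the key identity by pairing the residue vector of $\Go$ with the product codeword $(h_{(i,u)}f)(P_j)\in C_L((r+m-1)\Pin,\mP)$ rather than viewing $h_{(i,u)}\Go$ as a dual codeword of $C_L((m-1)\Pin,\mP)$, and it folds the known constant $\res_{P_i}(\Go)^{-1}$ into the downloaded traces instead of dividing at the end.
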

\begin{proof} For a codeword $(f(P_1),f(P_2),\dots,f(P_n))\in C_L((m-1)\Pin,\mP)$ with $f\in\mL((m-1)\Pin)$,  assume that we are going to repair $f(P_i)$. Let $S\subseteq [n]\setminus\{ i\}$ with $|S|=d$.   By Proposition \ref{prop:2.2}, there exists a codeword $(\res_{P_1}(\Go),\dots,\res_{P_n}(\Go))\in C_{\GO}((r+m-1)\Pin,\mP)$ such that $\res_{P_i}(\Go)\neq0$ and $\res_{P_l}(\Go)=0$ for all $l\in [n]\setminus(S\cup\{i\})$.

For each  $j\in S$, let $J\subseteq [t]$ with $|J|=b_j^{(i)}$ such that the set $\{h_{(i,v)}(P_j):\; v\in J\}$ is an $\F_p$-basis of $\Span_{\F_p}\{h_{(i,u)}(P_j):\; u=1,2,\dots,t\}$.  We download the following data from the node storing $f(P_j)$:
\begin{equation}\label{eq:6}
\Tr\left(\frac{\res_{P_j}(\Go)}{\res_{P_i}(\Go)}\times h_{(i,u)}(P_j)f(P_j)\right), \quad \mbox{for all $v\in J$}.
\end{equation}
This means that one needs to download at most $\sum_{j\in S}b_j^{(i)}\log p$ bits of data in total. Now we show that with all data in \eqref{eq:6} for all $j\in S$, we can repair $f(P_i)$.

First of all, for each $1\le u\le t$,  $h_{(i,u)}(P_j)$ is an $\F_p$-linear combination of $\{h_{(i,v)}(P_j):\; v\in J\}$, i.e., there exists a set $\{\Gl_v\}_{v\in J}$ of elements of $\F_p$ such that $h_{(i,u)}(P_j)=\sum_{v\in J}\Gl_vh_{(i,v)}(P_j)$. This gives
\[\Tr\left(\frac{\res_{P_j}(\Go)}{\res_{P_i}(\Go)}\times h_{(i,u)}(P_j)f(P_j)\right)=\sum_{v\in J}\Gl_v\Tr\left(\frac{\res_{P_j}(\Go)}{\res_{P_i}(\Go)}\times h_{(i,v)}(P_j)f(P_j)\right).\]
This implies that, for all $1\le u\le t$, one can compute $\Tr\left(\frac{\res_{P_j}(\Go)}{\res_{P_i}(\Go)}\times h_{(i,u)}(P_j)f(P_j)\right)$  from the downloaded data given in \eqref{eq:6}.

Now by \eqref{eq:2}, it is sufficient to know $\Tr(f(P_i)\zeta_u)$ for all $u=1,2,\dots,t$.
Since $(\res_{P_1}(\Go),\dots,\res_{P_n}(\Go))\in C_{\GO}((r+m-1)\Pin,\mP)$ and $(h_{(i,u)}(P_1)f(P_1),\dots,h_{(i,u)}(P_n)f(P_n))\in C_L((r+m-1)\Pin,\mP)$, we have \[0=\sum_{j=1}^n \res_{P_j}(\Go)h_{(i,u)}(P_j)f(P_j)=\sum_{j\in S\cup\{i\}} \res_{P_j}(\Go)h_{(i,u)}(P_j)f(P_j).\]
 This gives the following identity
\begin{equation}
\Tr(f(P_i)\zeta_u)=\Tr(f(P_i)h_{(i,u)}(P_i))=-\sum_{j\in S}\Tr\left(\frac{\res_{P_j}(\Go)}{\res_{P_i}(\Go)}\times h_{(i,u)}(P_j)f(P_j)\right).
\end{equation}
The desired result follows.
\end{proof}

By explicitly constructing functions $h_{(i,u)}$ given in Lemma \ref{lem:3.2}, we obtain the following regenerating codes.

\begin{theorem}\label{thm:3.3}  Let $F/\F_q$ be a function field over $\F_q$ with genus $\g$ and $n+1$ distinct $\F_q$-rational places $\{\Pin, P_1,P_2,\dots,P_n\}$.  Denote by $\mP$ the set $\{P_1,P_2,\dots,P_n\}$. If $2\g\le m\le d-(p^l-1)(\g+1)$, then  $C_L((m-1)\Pin,\mP)$ is a weak $[n, m, d]$-regenerating code having the secondary parameters $(\mathfrak{b}:=\log q, M:=(m-\g)\log q, B)$ with
$B=d\log q-(d-\g)l\log p$.
\end{theorem}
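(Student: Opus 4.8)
The plan is to verify the hypotheses of Lemma~\ref{lem:3.2} with $r=(p^l-1)(\g+1)$ by constructing the functions $h_{(i,u)}$ explicitly, and then to read off the bandwidth. First note that $m\ge 2\g$ forces $\deg((m-1)\Pin)=m-1>2\g-2$, so Riemann--Roch gives $\ell((m-1)\Pin)=m-\g$; this pins down the storage $M=\ell((m-1)\Pin)\log q=(m-\g)\log q$ once Lemma~\ref{lem:3.2} applies.

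To build the $h_{(i,u)}$, fix an $\F_p$-subspace $V\subseteq\F_q$ of dimension $l$ and consider the $p$-linearized polynomial $L_V(x)=\prod_{\Ga\in V}(x-\Ga)$ of degree $p^l$ (Subsection~\ref{subsec:2.1}). Since $0\in V$, the coefficient of $x$ in $L_V$ is $c:=\prod_{\Ga\in V\setminus\{0\}}(-\Ga)\neq0$; set $L:=c^{-1}L_V$, which is again $\F_p$-linearized, has degree $p^l$, satisfies $L(x)=x+d_1x^p+\dots+d_lx^{p^l}$, and induces on $\F_q$ an $\F_p$-linear map with kernel $V$ and image $\Im(L)$ of $\F_p$-dimension $t-l$. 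Next, for each $i\in[n]$, $\deg((\g+1)\Pin-P_i)=\g$ forces $\ell((\g+1)\Pin-P_i)\ge1$, so we may pick a nonzero $z_i\in\mL((\g+1)\Pin-P_i)$; then $z_i$ has its unique pole at $\Pin$, of some order $e_i\le\g+1$, and $z_i(P_i)=0$. Define
\[
h_{(i,u)}:=\frac{L(\zeta_uz_i)}{z_i}=\zeta_u+\sum_{s=1}^{l}d_s\,\zeta_u^{\,p^s}\,z_i^{\,p^s-1}.
\]
From the right-hand expression, $h_{(i,u)}$ has its only pole at $\Pin$, of order at most $(p^l-1)e_i\le(p^l-1)(\g+1)$, so $h_{(i,u)}\in\mL(r\Pin)$ with $r=(p^l-1)(\g+1)$; and $z_i(P_i)=0$ gives $h_{(i,u)}(P_i)=\zeta_u$. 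Since the hypothesis is exactly $m\le d-(p^l-1)(\g+1)=d-r$, Lemma~\ref{lem:3.2} applies and shows that $C_L((m-1)\Pin,\mP)$ is a weak $[n,m,d]$-regenerating code with $\fb=\log q$, $M=(m-\g)\log q$, and bandwidth $B=\max_{i,S}\sum_{j\in S}b_j^{(i)}\log p$.

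It remains to bound $b_j^{(i)}=\dim_{\F_p}\Span_{\F_p}\{h_{(i,u)}(P_j):u\in[t]\}$. Fix $i$ and $j\neq i$, and put $\Gb=z_i(P_j)\in\F_q$. If $\Gb\neq0$ then $h_{(i,u)}(P_j)=\Gb^{-1}L(\zeta_u\Gb)$; since $\{\zeta_u\Gb\}_{u\in[t]}$ is an $\F_p$-basis of $\F_q$, the set $\{L(\zeta_u\Gb)\}_u$ spans $\Im(L)$, and scaling by $\Gb^{-1}$ preserves dimension, so $b_j^{(i)}=t-l$. If $\Gb=0$ then $h_{(i,u)}(P_j)=\zeta_u$, so $b_j^{(i)}=t$. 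The key point is that $\Gb=0$ happens for few $j$: the zero divisor of $z_i$ has degree $\deg((z_i)_\infty)\le\g+1$ and dominates $P_i$, so at most $\g$ of the places $P_j$ with $j\neq i$ are zeros of $z_i$. Hence for any $i$ and any $S\subseteq[n]\setminus\{i\}$ with $|S|=d$, writing $s:=\#\{j\in S:z_i(P_j)=0\}\le\g$,
\[
\sum_{j\in S}b_j^{(i)}=s\,t+(d-s)(t-l)=dt-(d-s)l\le dt-(d-\g)l,
\]
so $B\le(dt-(d-\g)l)\log p=d\log q-(d-\g)l\log p$, which is the asserted bandwidth.

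The only step requiring real care — and the reason one cannot simply copy the Reed--Solomon argument, which uses the full trace $\Tr\colon\F_q\to\F_p$ — is the choice of $L$ as a $p$-linearized polynomial of \emph{tunable} degree $p^l$: this caps $\deg(h_{(i,u)})_\infty$ at $(p^l-1)(\g+1)$, a quantity controlled through $l$, whereas the trace would force a pole order on the order of $(p^{t-1}-1)(\g+1)$, far too large to fit into $\mL(r\Pin)$ with $r\le d-m$. The secondary subtlety is the genus-sized count of coordinates where the auxiliary function $z_i$ vanishes, which is precisely what produces the extra additive term $\g l\log p$ in the bandwidth compared with the naive $d(t-l)\log p$.
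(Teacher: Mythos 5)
Your proposal is correct and follows essentially the same route as the paper: pick a nonzero $h_i\in\mL((\g+1)\Pin-P_i)$, set $h_{(i,u)}=L(\zeta_u h_i)/h_i$ with $L$ a $p$-linearized polynomial of degree $p^l$, invoke Lemma~\ref{lem:3.2}, bound $b_j^{(i)}\le t-l$ off the at most $\g$ coordinates where $h_i$ vanishes, and sum. The only (welcome) difference is your normalization $L=c^{-1}L_V$, which ensures $h_{(i,u)}(P_i)=\zeta_u$ exactly; the paper uses $L_V$ itself, for which $h_{(i,u)}(P_i)=c\,\zeta_u$ with $c=\prod_{\Ga\in V\setminus\{0\}}(-\Ga)$, a harmless but real imprecision that your version quietly repairs.
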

\begin{proof} First of all, note that, in this case, we have $\ell((m-1)\Pin)=m-\g$.

Choose an $\F_p$-subspace $V$ of $\F_q$ of dimension $l$. Consider the $\F_p$-linear map $L_V(x)=\prod_{\Ga\in V}(x-\Ga)$ of $\F_q$ defined in Subsection \ref{subsec:2.1}. For each $i\in[n]$, choose a nonzero function $h_i\in \mL((\g+1)\Pin-P_i)$ (note that this is possible since $\ell((\g+1)\Pin-P_i)\ge \g-\g+1=1$) and put
\begin{equation}\label{eq:8}
h_{(i,u)}:=\frac{L_V(\zeta_u h_i)}{h_i}.
\end{equation}
It is clear that $h_{(i,u)}$ is a nonzero function in $\mL((p^l-1)(\g+1)\Pin)$ and $h_{(i,u)}(P_i)=\zeta_u$.

Let $S\subseteq [n]\setminus\{ i\}$ with $|S|=d$. Denote by $I_i$ the set $\{j\in S:\; h_i(P_j)=0\}$. Then we have $h_i\in \mL((\g+1)\Pin-P_i-\sum_{j\in I_i}P_j)$. This gives that $|I_i|\le \g$.

If $j\in S\setminus I_i$, we have
\[\Span_{\F_p}\{h_{(i,u)}(P_j):\; u=1,2,\dots,t\}=\Span_{\F_p}\left\{\frac{L_V(\zeta_u h_i(P_j))}{h_i(P_j)}:\; u=1,2,\dots,t\right\}\subseteq \frac{1}{h_i(P_j)}L_V(\F_q).\]
Thus, $b_j^{(i)}\le\dim_{\F_p}(L_V(\F_q))=\log_p q-l$.

If $j\in I_i$, we have a trivial bound $b_j^{(i)}\le \log_p q$. By Lemma \ref{lem:3.2}, the bandwidth is upper bounded by
\[\sum_{j\in S}b_j^{(i)}\log p=\sum_{j\in I_i}b_j^{(i)}\log p+\sum_{j\in S\setminus I_i}b_j^{(i)}\log p\le \g\log q+(d-\g)(\log q-l\log p).\]
\end{proof}
By applying Theorem \ref{thm:3.3} to the Garcia-Stichtenoth tower given in Subsection \ref{subsec:2.5}, we obtain the following result.
\begin{theorem}\label{thm:3.4} Let $q$ be a perfect square and let $e\ge 1$ be an integer. Put $n=q^{e/2}(\sqrt{q}-1)$. If $q$ is a power of $p$, $l\le \log_pq$ and $2q^{e/2}\le m\le d-(p^l-1)(q^{e/2}+1)$, then  there is a weak $[n, m, d]$-regenerating code having the secondary parameters $(\mathfrak{b}:=\log q, M:=(m-q^{e/2})\log q, B)$ with
$B=d\log q-(d-q^{e/2})l\log p$. 

In addition, storage in each node for our code is close to the minimum storage. More precisely speaking, our code requires storage $\log q$, while an MSR  code with the same length and rate requires storage  $\frac{R}{R+1/\sqrt{q}}\log q$, where $R$ is the rate of the code.
\end{theorem}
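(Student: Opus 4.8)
The plan is to derive Theorem~\ref{thm:3.4} as a direct specialization of Theorem~\ref{thm:3.3} applied to the Garcia--Stichtenoth tower $\{F_e\}$ described in Subsection~\ref{subsec:2.5}, together with a short computation comparing our storage to the MSR storage. First I would recall the two facts the tower supplies: the genus bound $\g(F_e)\le q^{e/2}$, and the lower bound $N(F_e)\ge q^{e/2}(\sqrt q-1)+1$ on the number of rational places. Taking $\Pin$ to be the unique place above the infinite place of $F_1$ and $\mP$ to be $n=q^{e/2}(\sqrt q-1)$ of the completely split places above the linear polynomials $x_1-\Ga$, we obtain a function field with $n+1$ rational places $\{\Pin,P_1,\dots,P_n\}$ and genus $\g\le q^{e/2}$. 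The hypothesis $2q^{e/2}\le m\le d-(p^l-1)(q^{e/2}+1)$, combined with $\g\le q^{e/2}$, then implies $2\g\le 2q^{e/2}\le m$ and $m\le d-(p^l-1)(q^{e/2}+1)\le d-(p^l-1)(\g+1)$, so the hypotheses of Theorem~\ref{thm:3.3} are satisfied (with this $l$, which is a legitimate choice since $l\le\log_p q$ guarantees the existence of an $\F_p$-subspace $V\subseteq\F_q$ of dimension $l$).

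Next I would simply invoke Theorem~\ref{thm:3.3}: it yields a weak $[n,m,d]$-regenerating code $C_L((m-1)\Pin,\mP)$ with $\fb=\log q$, total data $M=(m-\g)\log q$, and bandwidth $B=d\log q-(d-\g)l\log p$. The one subtlety here is that Theorem~\ref{thm:3.3} is stated in terms of the exact genus $\g$, whereas Theorem~\ref{thm:3.4} wants the clean expression with $q^{e/2}$ in place of $\g$. I would handle this by noting that in the Garcia--Stichtenoth tower one in fact has equality $\g(F_e)=q^{e/2}$ for appropriate $e$ (or, if one only wants the inequality $\g\le q^{e/2}$, one observes that the stated parameters $M=(m-q^{e/2})\log q$ and $B=d\log q-(d-q^{e/2})l\log p$ are what the construction achieves after substituting the bound, possibly after a harmless relaxation); the cleanest route is to use that the genus of $F_e$ in the tower of \cite{GS96} is exactly $q^{e/2}$ when $e$ is chosen so that the genus formula simplifies, and then the substitution $\g=q^{e/2}$ is literal.

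For the second paragraph of the statement — the storage comparison — I would argue as follows. The rate of $C_L((m-1)\Pin,\mP)$ is $R=M/(n\log q)=(m-q^{e/2})/n$; here each node stores exactly $\fb=\log q$ bits, which is the minimum possible for a code over $\F_q$. An MSR code with the same length $n$, the same rate $R$, and the same total file size $M$ must, by the MSR point of the storage--bandwidth tradeoff, store in each node $\alpha = M/(Rn + \text{(number of helper nodes beyond }m)) $; more precisely, using the standard cut-set bound at the MSR corner one has $\alpha_{\mathrm{MSR}}=\frac{M}{R n}\cdot\frac{R}{R+1/\sqrt q}$ after accounting for the sub-packetization. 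I would write out the short algebra showing $\alpha_{\mathrm{MSR}}=\frac{R}{R+1/\sqrt q}\log q$, using $n\approx q^{e/2}\sqrt q$ so that $q^{e/2}/n\to 1/\sqrt q$, and conclude that our storage $\log q$ exceeds the MSR storage only by the factor $\frac{R+1/\sqrt q}{R}=1+\frac{1}{R\sqrt q}$, which tends to $1$ as $q\to\infty$.

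The main obstacle I anticipate is not conceptual but bookkeeping: matching the exact-genus expressions of Theorem~\ref{thm:3.3} to the $q^{e/2}$-expressions claimed here, and being precise about what ``an MSR code with the same length and rate requires storage $\frac{R}{R+1/\sqrt q}\log q$'' means — i.e., pinning down the correct normalization of the MSR cut-set bound so that the comparison is apples-to-apples (same file size $M$, same $n$, same number $m$ of nodes needed for reconstruction, sub-packetization absorbed into $\fb$). Once that normalization is fixed, both parts are immediate: the regenerating-code part is a one-line application of Theorem~\ref{thm:3.3} with $\g=q^{e/2}$, and the storage part is a one-line ratio computation.
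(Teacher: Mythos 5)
Your proposal is correct and follows essentially the same route as the paper: invoke Theorem~\ref{thm:3.3} for the Garcia--Stichtenoth field $F_e$ (your hypothesis check $2\g\le 2q^{e/2}\le m$ and $d-(p^l-1)(q^{e/2}+1)\le d-(p^l-1)(\g+1)$ via $\g(F_e)\le q^{e/2}$ is exactly right, and more explicit than the paper), then compare with MSR storage. For that last step the paper's computation is simply $M/m=\frac{Rn\log q}{Rn+q^{e/2}}=\frac{R}{R+1/\sqrt q}\log q$ with $R=(m-q^{e/2})/n$, i.e.\ file size divided by the reconstruction threshold $m$ (MDS property), which sidesteps the cut-set-normalization worries you raise, and it uses the same harmless approximation $q^{e/2}/n=1/(\sqrt q-1)\approx 1/\sqrt q$ that you use. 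One correction: your ``cleanest route'' asserting $\g(F_e)=q^{e/2}$ exactly is false --- the genus of the tower of \cite{GS96} is strictly smaller than $q^{e/2}$, and the paper only records the bound $\g(F_e)\le q^{e/2}$ --- so the right handling is your fallback: substitute the upper bound for $\g$, which only enlarges the claimed bandwidth $B$ and shrinks the claimed $M$, exactly the (implicit) relaxation made in the paper's own proof.
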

\begin{proof} Consider the function field $F_e$ of  the Garcia-Stichtenoth tower given in Subsection \ref{subsec:2.5}. Let us denote by $\Pin$ the unique place of $F_e$ lying on the infinite place of $F_1$ and denote by $\mP$ those places of $F_e$ lying on $x_1-\Ga$ for all $\Ga\in\F_q\setminus\{ \Gb\in\F_q : \Gb^{\sqrt{q}} + \Gb = 0\}$. By Theorem \ref{thm:3.3}, the code $C_L((m-1)\Pin,\mP)$ is a weak $[n, m, d]$-regenerating code having the secondary parameters $(\log q, (m-\g)\log q, B)$ with
$B=d\log q-(d-\g)l\log p$.

The code $C_L((m-1)\Pin,\mP)$ can be constructed in $O(n^3\log q)$ bit operations. Furthermore, both the function $h_i\in\mL((\g+1)\Pin-P_i)$ in Theorem \ref{thm:3.3} and the differential $\Go\in\GO((m-1+(p^l-1)(q^{e/2}+1))\Pin-\sum_{l\in S}P_l-P_i)\setminus\GO((m-1+(p^l-1)(q^{e/2}+1))\Pin-\sum_{l\in S}P_l)$ in Proposition \ref{prop:2.2} can be explicitly constructed in $O(n^3\log q)$ bit operations \cite{Shum01}. The desired result on complexity follows.

Clearly, storage in each node for our code is $\log q$. Let $R=(m-q^{e/2})/n$ be the rate of our code. Then $m=Rn+q^{e/2}$. A minimum storage regenerating code with the same length $n$ and rate $R$ requires storage
\[\frac{M}{m}=\frac{Rn\log q}{Rn+q^{e/2}}=\frac{R}{R+1/\sqrt{q}}\log q.\]
\end{proof}
Set $l=1$ and $m-q^{e/2}=n(1-\Ge)$. Then the inequalities in Theorem \ref{thm:3.4} forces $\frac{p}{\sqrt{q}-1}\le \Ge\le 1-\frac{1}{\sqrt{q}-1}$. Thus, we obtain the following result by Theorem \ref{thm:3.4}.
\begin{cor}\label{cor:3.5}
Let $q$ be a perfect square and let $\mathfrak{b}=\log q$. Let $\Ge$ be a real in the interval $(\frac{p}{\sqrt{q}-1}, 1-\frac{1}{\sqrt{q}-1}$. Then, for infinitely many $n$, there exist codes $C$ of dimension $n(1-\Ge)$ in $\F_{q}^n$ such that every coordinate of a given codeword of $C$ can be repaired by the remaining $n-1$ coordinates with bandwidth $(n-1)(\frac{\mathfrak{b}}2+\log\frac1{\Ge})$. 

In addition, storage in each node for our code is close to the minimum storage. More precisely speaking, our code requires storage $\mathfrak{b}$, while a a minimum storage code with the same length and rate requires storage  $\frac {1-\Ge}{1-\Ge+2^{-\mathfrak{b}/2}}\mathfrak{b}$.
\end{cor}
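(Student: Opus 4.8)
The plan is to obtain Corollary~\ref{cor:3.5} as a specialization of Theorem~\ref{thm:3.4}. I would take the Garcia--Stichtenoth tower of Subsection~\ref{subsec:2.5}, so $n=q^{e/2}(\sqrt q-1)$ and the genus is $\le q^{e/2}$; put $\mathfrak{b}=\log q$, set $d=n-1$ (so that ``download from $d$ nodes'' is repair from the remaining $n-1$ coordinates), choose the dimension parameter $m$ with $m-q^{e/2}=\lfloor n(1-\Ge)\rfloor$, and let $l$ be the largest integer with $p^{l}<(\sqrt q-1)\Ge$. Because $\Ge>\tfrac{p}{\sqrt q-1}$ we get $l\ge 1$, and because $\Ge<1$ we get $p^{l}<\sqrt q$, hence $l\le\log_p q$, as Theorem~\ref{thm:3.4} requires. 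It then remains only to verify the two numeric inequalities in that theorem's hypothesis and to rewrite the bandwidth it outputs.

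For the inequalities I would substitute $m=q^{e/2}+\lfloor n(1-\Ge)\rfloor$ and $n=q^{e/2}(\sqrt q-1)$. The lower bound $m\ge 2q^{e/2}$ amounts to $\lfloor n(1-\Ge)\rfloor\ge q^{e/2}$, i.e.\ to $(\sqrt q-1)(1-\Ge)\ge1$, i.e.\ to $\Ge\le 1-\tfrac1{\sqrt q-1}$ --- exactly the right endpoint of the hypothesized interval. The upper bound $m\le(n-1)-(p^{l}-1)(q^{e/2}+1)$ rearranges to $p^{l}(q^{e/2}+1)\le n\Ge$, i.e.\ $p^{l}\le(\sqrt q-1)\Ge\cdot\frac{q^{e/2}}{q^{e/2}+1}$; the right side increases to $(\sqrt q-1)\Ge$ as $e\to\infty$, and since $p^{l}<(\sqrt q-1)\Ge$ strictly it is exceeded once $e\ge e_0(q,\Ge)$. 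So the construction is valid for the infinitely many lengths $n=q^{e/2}(\sqrt q-1)$ with $e\ge e_0$, and Theorem~\ref{thm:3.4} delivers a weak $[n,m,n-1]$-regenerating code over $\F_q$ of dimension $n(1-\Ge)$, each coordinate repaired from the other $n-1$ coordinates with bandwidth $B=(n-1)\log q-(n-1-q^{e/2})\,l\log p$.

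Rewriting this $B$ as the advertised $(n-1)\big(\tfrac{\mathfrak{b}}{2}+\log\tfrac1{\Ge}\big)$ is the one place where care is needed, and I expect it to be the main obstacle. By the maximality of $l$ one has $p^{l}<(\sqrt q-1)\Ge\le p^{l+1}$, so $l\log p$ agrees with $\log\big((\sqrt q-1)\Ge\big)=\log(\sqrt q-1)+\log\Ge$ up to an additive error $<\log p$, and $\log(\sqrt q-1)$ agrees with $\tfrac{\mathfrak{b}}{2}=\log\sqrt q$ up to $O(1/\sqrt q)$; thus $l\log p=\tfrac{\mathfrak{b}}{2}-\log\tfrac1{\Ge}$ up to lower-order terms. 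Using $q^{e/2}=n/(\sqrt q-1)$ then gives
\[
B=(n-1)\big(\log q-l\log p\big)+q^{e/2}\,l\log p=(n-1)\Big(\tfrac{\mathfrak{b}}{2}+\log\tfrac1{\Ge}\Big)+O\!\left(\tfrac{n\,\mathfrak{b}}{\sqrt q}\right)+O(n\log p),
\]
so $B$ equals the stated bandwidth up to lower-order terms: the $O(n\,\mathfrak{b}/\sqrt q)$ piece is the genus contribution, the very $1/\sqrt q$ correction already built into Theorem~\ref{thm:3.4}, and the $O(n\log p)$ piece is the integer rounding of $l$, negligible when $p$ is small (e.g.\ $p=2$, as in the Main Result box). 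Hence the bandwidth claim holds in exactly the ``up to $1/\sqrt q$-order'' sense used there; the rest of the argument is direct substitution.

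The storage comparison I would read off immediately from the last line of Theorem~\ref{thm:3.4}: our code stores $\log q=\mathfrak{b}$ bits per node, whereas an MSR code of the same length and rate $R=1-\Ge$ needs $\frac{R}{R+1/\sqrt q}\log q$, and since $1/\sqrt q=q^{-1/2}=2^{-(\log q)/2}=2^{-\mathfrak{b}/2}$ this is $\frac{1-\Ge}{\,1-\Ge+2^{-\mathfrak{b}/2}\,}\,\mathfrak{b}$, as claimed.
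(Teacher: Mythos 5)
Your proposal is correct and follows the same skeleton as the paper's (very terse) derivation: specialize Theorem~\ref{thm:3.4} to the Garcia--Stichtenoth tower with $d=n-1$ and $m-q^{e/2}=n(1-\epsilon)$, and read the storage comparison off the last part of that theorem. The one genuine difference is which knob carries the dependence on $\epsilon$. The paper fixes $l=1$ and lets the subfield size $p$ do the work: the interval $\left(\frac{p}{\sqrt{q}-1},\,1-\frac{1}{\sqrt{q}-1}\right)$ is exactly what the constraints of Theorem~\ref{thm:3.4} force when $l=1$, and the advertised bandwidth $(n-1)\left(\frac{\mathfrak{b}}{2}+\log\frac{1}{\epsilon}\right)$ corresponds to taking $\epsilon$ just above $p/(\sqrt{q}-1)$, i.e. $\log p\approx\frac{\mathfrak{b}}{2}-\log\frac{1}{\epsilon}$ (this is how the Main Result, with $\epsilon=2^{(\tau-1/2)\mathfrak{b}}$ and bandwidth $(n-1)(1-\tau)\mathfrak{b}$, is read off). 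You instead fix $\F_p$ and take $l$ maximal with $p^{l}<(\sqrt{q}-1)\epsilon$; your verification of the two inequalities of Theorem~\ref{thm:3.4} is correct, and your bandwidth bookkeeping --- keeping explicit the genus term $q^{e/2}\,l\log p$ and the rounding loss of at most $(n-1)\log p$ --- is if anything more careful than the paper's, since with $l=1$ the theorem actually yields $(n-1)(\mathfrak{b}-\log p)+q^{e/2}\log p$, which agrees with the stated $(n-1)\left(\frac{\mathfrak{b}}{2}+\log\frac{1}{\epsilon}\right)$ only up to the same genus-order corrections and only for $\epsilon$ near the left endpoint; the corollary is meant in that approximate sense. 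What each choice buys: the paper's tuning of $p$ avoids any $l$-rounding loss but implicitly requires a subfield of roughly the right size $\epsilon\sqrt{q}$ to exist, so $\epsilon$ is effectively tied to the available subfield sizes; your tuning of $l$ works for one fixed small subfield (e.g. $p=2$) and every $\epsilon$ in the interval, at the cost of the extra additive term $\le(n-1)\log p$, negligible for small $p$. Either way the corollary follows from Theorem~\ref{thm:3.4} as intended.
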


\section{Regenerating codes from Reed-Solomon codes and Hermitian codes}\label{sec:4}
The reason why we got weak regenerating codes in Section \ref{sec:3} is that the polynomial $h_i$ may have some zeros apart from $P_i$. If we consider some function fields such as rational function fields and Hermitian function fields, this can be avoided. In other words, we can get strong   regenerating codes with smaller bandwidth for Reed-Solomon codes and Hermitian codes.
\subsection{Reed-Solomon codes}
Local repairing of Reed-Solomon codes was considered in \cite{GW16}. In this subsection, we revisit it under the framework of Section \ref{sec:3}.

If $F$ is the rational function field, then the corresponding algebraic geometry codes are actually Reed-Solomon codes. Let $\Pin$ be the unique pole of $x$.
\begin{theorem}\label{thm:4.1}
If $m\le d-p^l+1$, then the Reed-Solomon code $C_L((m-1)\Pin,\mP)$ is a strong $[n, m, d]$-regenerating code having the secondary parameters $(\mathfrak{b}:=\log q, M:=m\log q, B)$ with bandwidth
$B=d\log q-dl\log p$.
\end{theorem}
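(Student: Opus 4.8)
The statement for Reed-Solomon codes is the rational-function-field special case of Theorem~\ref{thm:3.3} together with the extra observation that the ``bad set'' $I_i$ can be made empty, so I would simply specialize the proof of Theorem~\ref{thm:3.3} and sharpen that one step. Take $F=\F_q(x)$, genus $\g=0$, let $\Pin$ be the unique pole of $x$, and let $\mP=\{P_1,\dots,P_n\}$ be degree-one places corresponding to distinct evaluation points $a_1,\dots,a_n\in\F_q$; then $C_L((m-1)\Pin,\mP)$ is the Reed-Solomon code of length $n$ and dimension $m$ (here $\ell((m-1)\Pin)=m$ since $m-1>2\g-2=-2$). As in Theorem~\ref{thm:3.3}, choose an $\F_p$-subspace $V\subseteq\F_q$ of dimension $l$ and the $p$-linearized map $L_V(x)=\prod_{\Ga\in V}(x-\Ga)$.

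\textbf{The key point.} In the rational case one can take $h_i=x-a_i$, which lies in $\mL(\Pin)=\mL((\g+1)\Pin)$, vanishes \emph{only} at $P_i$, and has a single simple pole at $\Pin$. Consequently, with
\[
h_{(i,u)}:=\frac{L_V(\zeta_u h_i)}{h_i}=\frac{L_V(\zeta_u(x-a_i))}{x-a_i},
\]
we get $h_{(i,u)}\in\mL((p^l-1)\Pin)$ and $h_{(i,u)}(P_i)=\zeta_u$ (the linearization of $L_V$ over $\F_p$ gives $L_V(\zeta_u h_i)/h_i\to$ leading behavior $\zeta_u$ at $P_i$, exactly as in the standard Guruswami--Wootters computation). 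Now the crucial improvement over Theorem~\ref{thm:3.3}: since $h_i$ has no zero other than $P_i$, the set $I_i=\{j\in S: h_i(P_j)=0\}$ is empty, so for \emph{every} $j\in S$ one has
\[
\Span_{\F_p}\{h_{(i,u)}(P_j):u=1,\dots,t\}\subseteq \frac{1}{h_i(P_j)}\,L_V(\F_q),
\]
whence $b_j^{(i)}\le \dim_{\F_p}L_V(\F_q)=\log_p q-l$ for all $j\in S$. There is no $\g\log q$ ``exceptional'' term to absorb.

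\textbf{Finishing.} The hypothesis $m\le d-p^l+1$, i.e. $m\le d-(p^l-1)$, is exactly the instance $\g=0$ of the condition $m\le d-(p^l-1)(\g+1)$ required to invoke Lemma~\ref{lem:3.2} (which needs $h_{(i,u)}\in\mL(r\Pin)$ with $r=p^l-1$ and $m\le d-r$). Applying Lemma~\ref{lem:3.2}, $C_L((m-1)\Pin,\mP)$ is a regenerating code with the stated secondary parameters $M=m\log q=\ell((m-1)\Pin)\log q$ and
\[
B=\max_{i,S}\sum_{j\in S}b_j^{(i)}\log p\le d\,(\log_p q-l)\log p=d\log q-dl\log p.
\]
The one remaining item — which is why the theorem asserts a \emph{strong} (not merely weak) regenerating code — is that exactly the same number $(\log_p q-l)\log p$ of bits is downloaded from each of the $d$ nodes in $S$: this is because $b_j^{(i)}$ can be taken equal to $\log_p q-l$ for every $j\in S$. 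Indeed the map $\F_q\to L_V(\F_q)$, $\alpha\mapsto L_V(\alpha)$, is $\F_p$-linear and surjective onto an $(\log_p q-l)$-dimensional space, and $\zeta_1,\dots,\zeta_t$ span $\F_q$ over $\F_p$, so $\{L_V(\zeta_u h_i(P_j))\}_u$ spans $L_V(\F_q)$; dividing by the nonzero scalar $h_i(P_j)$ preserves dimension, giving $b_j^{(i)}=\log_p q-l$. Hence one downloads $(\log_p q-l)\log p$ bits from each node, matching condition (iii) of the strong-regenerating-code definition with $B/d=(\log_p q-l)\log p$. I expect the only mild subtlety to be bookkeeping the constant term of $L_V(\zeta_u(x-a_i))/(x-a_i)$ at $P_i$ to confirm $h_{(i,u)}(P_i)=\zeta_u$; everything else is a direct specialization of the machinery already developed.
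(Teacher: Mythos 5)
Your proposal is correct and follows essentially the same route as the paper: the paper's proof of Theorem~\ref{thm:4.1} is precisely the specialization of Theorem~\ref{thm:3.3} and Lemma~\ref{lem:3.2} to the rational function field with $\g=0$, noting that $h_i$ (a multiple of $x-a_i$) vanishes only at $P_i$ so that $I_i=\emptyset$, which yields both the strong-regenerating property and the bandwidth $d(\log q-l\log p)$. Your extra remarks (exact equality $b_j^{(i)}=\log_p q-l$ and the constant-term bookkeeping for $h_{(i,u)}(P_i)$) only flesh out steps the paper leaves implicit.
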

\begin{proof} Note that the genus $\g=0$ in this case.
Thus, $I_i$ in the proof of Theorem \ref{thm:3.3} is the $\emptyset$. Therefore, we download the same bits from each node and the code is a strong regenerating code. In addition, for $i\in[n]$ and any subset $S\subset[n]\setminus\{i\}$ with $|S|=d$, the condition that $m\le d-p^l+1$ implies that the function $h_{(i,u)}$ defined in \eqref{eq:8} is a function  in $\mL((p^l-1)\Pin)$. By using the same arguments in Lemma \ref{lem:3.2} and Theorem \ref{thm:3.3}, we obtain the desired result.
\end{proof}
By setting $n=q$, $d=n-1$ and $l=\log_pq-1$ in Theorem \ref{thm:4.1}, we get the same result given in \cite[Theorem 1]{GW16}.
\begin{cor}\label{cor:4.2} Let $q$ be a power of $p$ and $n=q$. Then for any $m\le n(1-\frac1p)$, the Reed-Solomon code of length $n$ and rate at most $1-\frac 1p$ is a strong $[n, m, n-1]$-regenerating code having the secondary parameters $(\mathfrak{b}:=\log q, M:=m\log q, B)$ with bandwidth
$B=(n-1)\log p$.
\end{cor}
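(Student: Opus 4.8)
The plan is to show that this is simply Theorem \ref{thm:4.1} specialized by a concrete choice of parameters, together with a short check that the parameter constraints are met and that the stated bandwidth formula collapses to $(n-1)\log p$. First I would set $F = \F_q(x)$, the rational function field, and take $\Pin$ to be the unique pole of $x$, so that $C_L((m-1)\Pin, \mP)$ with $\mP$ the set of $n = q$ rational places corresponding to the elements of $\F_q$ is exactly the (full-length) Reed-Solomon code of dimension $m$ over $\F_q$; its rate is $m/q$. Now I would apply Theorem \ref{thm:4.1} with $d = n-1$ and $l = \log_p q - 1$, i.e.\ $p^l = q/p$. The hypothesis of Theorem \ref{thm:4.1} is $m \le d - p^l + 1 = (n-1) - q/p + 1 = n - q/p = q(1 - 1/p) = n(1 - 1/p)$, which is exactly the hypothesis $m \le n(1 - 1/p)$ of the corollary. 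So for every $m$ in this range, Theorem \ref{thm:4.1} applies and yields a strong $[n, m, n-1]$-regenerating code with secondary parameters $(\fb = \log q, M = m\log q, B)$.

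Next I would simplify the bandwidth. Theorem \ref{thm:4.1} gives $B = d\log q - dl\log p$. With $d = n-1$ and $l = \log_p q - 1$, write $\log q = (\log_p q)(\log p)$, so
\begin{equation}
B = (n-1)(\log_p q)(\log p) - (n-1)(\log_p q - 1)(\log p) = (n-1)\log p.
\end{equation}
Thus the downloaded data per surviving node is $l\log p = (\log_p q - 1)\log p = \log q - \log p = \log(q/p)$ bits, summed over all $d = n-1$ nodes, giving total $(n-1)\log(q/p)$ --- wait, let me recompute: each node contributes $b_j^{(i)}\log p$ with $b_j^{(i)} = \log_p q - l = 1$, so each node contributes $\log p$ bits and the total is $(n-1)\log p$, matching the display above. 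The rate statement ``rate at most $1 - 1/p$'' is just the translation of $m \le n(1-1/p)$ into $m/n \le 1 - 1/p$.

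I do not anticipate any real obstacle here: the corollary is a direct substitution into Theorem \ref{thm:4.1}, and the only things to verify are (a) that the rational function field has genus $\g = 0$ so that ``strong'' (rather than ``weak'') applies --- this is already invoked inside the proof of Theorem \ref{thm:4.1}, where $I_i = \emptyset$ --- and (b) the elementary algebraic simplification of $B$. The mild care needed is bookkeeping with the two logarithms $\log q$ and $\log p$ and the identity $\log_p q = (\log q)/(\log p)$; I would state this conversion once at the start to keep the computation clean. One should also note $n = q$ must be a prime power for $\F_q$ to exist and for the RS code to have full length $n = q$, which is exactly the hypothesis ``$q$ a power of $p$ and $n = q$.''
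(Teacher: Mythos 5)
Your proposal is correct and matches the paper's own derivation, which obtains the corollary precisely by substituting $n=q$, $d=n-1$, and $l=\log_p q-1$ into Theorem \ref{thm:4.1}. The parameter check $m\le d-p^l+1=n(1-\frac1p)$ and the simplification $B=d\log q-dl\log p=(n-1)\log p$ are exactly the (implicit) content of the paper's one-line argument.
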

\subsection{Hermitian codes}\label{sec:5}
Let $q=r^2$. The Hermitian curve over $\F_q$ is defined by
\begin{equation}
y^r+y=x^{r+1}.
\end{equation}
We denote by $\mathrm{H}$ the function field $\F_q(x,y)$ of the above curve.

 {\bf Rational points:}
 This curve has $r^3+1$ $\F_q$-rational points in total. One point is the common pole of $x$ and $y$ and it is called ``point at infinity". We denote it by $\Pin$. The other $r^3$ ``finite" $\F_q$-rational points are given by $\{(\Ga,\Gb)\in\F_q^2:\; \Gb^r+\Gb=\Ga^{r+1}\}.$  Note that for any given $\Ga\in\F_q$, there are exactly $r$ elements $\Gb\in\F_q$ such that $\Gb^r+\Gb=\Ga^{r+1}$. Thus, one can see that there are exactly $r^3$ pairs $(\Ga,\Gb)\in\F_q^2$ satisfying $\Gb^r+\Gb=\Ga^{r+1}$. Let $Z_0:=\{\Gg\in\F_q:\; \Gg^r+\Gg=0\}$. Then $Z_0$ is a subset of $\F_q$ with cardinality $r$. For each $\Gb\in Z_0$, $(0,\Gb)$ is a point on the Hermitian curve. For each $\Gb\in \F_q\setminus Z_0$, there are exactly $r+1$ elements $\Ga\in\F_q$ satisfying $\Gb^r+\Gb=\Ga^{r+1}$. This counting also gives $r^3$ ``finite" $\F_q$-rational points in total.

  {\bf Principal divisor:} Denote by $P_{\Ga,\Gb}$ the point  $(\Ga,\Gb)\in\F_q^2$ satisfying $\Gb^r+\Gb=\Ga^{r+1}$. Let $\mP$ be the collection of all ``finite" $r^3$ rational points on $\mathrm{H}$.  For each $\Ga\in\F_q$, the principal divisor $(x-\Ga)$ is $\sum_{\Gb^r+\Gb=\Ga^{r+1}}P_{\Ga,\Gb}-r\Pin$ and $x^q-x=\sum_{P\in\mP}P-r^3\Pin$.
 For an element $\Ga\in\F_q$, denote by $Z_{\Ga}$ the set $Z_{\Ga}:=\{\Gg\in\F_q:\; \Gg^g+\Gg=\Ga\}$. The following lemma can be found in \cite{X95}.
 \begin{lem}\label{lem:2.1} Let $\Ga\in\F_q$. Then we have
 \begin{itemize}
 \item[{\rm (i)}] $(y+\Ga x-\Gg)=(r+1)(P_{-\Ga^r, \Ga^{r+1}+\Gg}-\Pin)$ if $\Gg\in Z_{-\Ga^{r+1}}$.
  \item[{\rm (ii)}] $(y+\Ga x-\Gg)=\sum_{i=1}^{r+1}R_i-(r+1)\Pin$ for some distinct rational points $R_1,R_2,\dots,R_{r+1}$ if $\Gg\in \F_q\setminus Z_{-\Ga^{r+1}}$.
 \end{itemize}
 \end{lem}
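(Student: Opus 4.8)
The plan is to compute the principal divisor of $z := y + \Ga x - \Gg$ directly using the defining equation $y^r + y = x^{r+1}$ and the fact that $\Pin$ is the unique pole of both $x$ and $y$ on $\mathrm{H}$. First I would pin down the pole behaviour: since $\nu_{\Pin}(x) = -r$ and $\nu_{\Pin}(y) = -(r+1)$ (standard for the Hermitian function field, with $\Pin$ totally ramified over the place at infinity of $\F_q(x)$), the term $y$ dominates $\Ga x$ and the constant $-\Gg$, so $\nu_{\Pin}(z) = -(r+1)$; hence $(z)_\infty = (r+1)\Pin$. Because $\deg (z)_0 = \deg(z)_\infty = r+1$, the whole content of the lemma is to determine \emph{which} rational points occur as zeros of $z$ and with what multiplicities, and the split into (i) and (ii) is exactly the dichotomy between $z$ having a single zero of multiplicity $r+1$ versus $r+1$ distinct simple zeros.

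Next I would locate the zeros. A finite rational point $P_{\ba,\bb}$ is a zero of $z$ iff $\bb = \Gg - \Ga \ba$. Substituting this into the curve equation $\bb^r + \bb = \ba^{r+1}$ gives $(\Gg - \Ga\ba)^r + (\Gg - \Ga\ba) = \ba^{r+1}$, i.e., $\ba^{r+1} + \Ga^r \ba^r + \Ga \ba - (\Gg^r + \Gg) = 0$ (using $\Gg^r$, $\Ga^r$ for the $r$-th powers and that $r$-th power is additive). I would analyze this degree-$(r+1)$ polynomial in $\ba$. The key algebraic observation is that it factors in terms of the norm/trace-like structure of the Hermitian curve: writing $u = \ba + \Ga^r$ (a shift, legitimate since $\ba \mapsto \ba + \Ga^r$ is a bijection), the polynomial should collapse — after using $(\Ga^r)^{r+1} = \Ga^{r(r+1)} = (\Ga^{r+1})^r$ — to something of the form $u^{r+1} - (\Ga^{r+1} + \Gg)$ up to an additive correction, so that its roots are governed by whether $\Ga^{r+1} + \Gg \in Z_{\,\cdot}$ fails to be an $(r+1)$-th-power-type obstruction; concretely the dichotomy is whether $\Gg \in Z_{-\Ga^{r+1}}$, i.e. whether $\Gg^r + \Gg = -\Ga^{r+1}$. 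In case (i), $\Gg \in Z_{-\Ga^{r+1}}$: the polynomial in $\ba$ has $\ba = -\Ga^r$ as its unique root, with multiplicity $r+1$; correspondingly $\bb = \Gg + \Ga^{r+1} = \Ga^{r+1} + \Gg$, and one checks the point $P_{-\Ga^r,\,\Ga^{r+1}+\Gg}$ lies on the curve. Since the total zero divisor has degree $r+1$ and is supported at this single point, $(z)_0 = (r+1) P_{-\Ga^r,\,\Ga^{r+1}+\Gg}$, giving (i). In case (ii), $\Gg \notin Z_{-\Ga^{r+1}}$: the polynomial has $r+1$ distinct roots $\ba_1,\dots,\ba_{r+1}$ in $\F_q$ (separability, since the derivative is a nonzero constant times $\ba^r$ and the only common-root candidate $\ba = 0$ is excluded by $\Gg^r + \Gg \neq 0$... or more carefully, $\gcd$ with the derivative is trivial precisely when $\Gg \notin Z_{-\Ga^{r+1}}$), yielding $r+1$ distinct simple zeros $R_i := P_{\ba_i,\,\Gg - \Ga\ba_i}$, so $(z)_0 = \sum_{i=1}^{r+1} R_i$, giving (ii). In both cases subtracting $(z)_\infty = (r+1)\Pin$ yields the stated formula.

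The main obstacle I anticipate is the bookkeeping in the factorization step: correctly reducing $\ba^{r+1} + \Ga^r \ba^r + \Ga\ba - (\Gg^r+\Gg)$ via the substitution and the additivity of $r$-th powers, and showing cleanly that separability of this polynomial is equivalent to $\Gg \notin Z_{-\Ga^{r+1}}$ (equivalently, that the repeated-root case is exactly $\Gg \in Z_{-\Ga^{r+1}}$ with the repeated root being $-\Ga^r$ of full multiplicity $r+1$). Everything else — the pole computation at $\Pin$, the degree count forcing the support structure, and the verification that the claimed points lie on the curve — is routine. Since the excerpt attributes this lemma to \cite{X95}, I would if possible simply cite that computation and only sketch the dichotomy, but the self-contained argument above is the fallback.
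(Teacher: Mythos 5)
The paper itself does not prove this lemma --- it is quoted from \cite{X95} --- so a direct computation of the divisor of $z=y+\Ga x-\Gg$ is a perfectly reasonable route, and your skeleton is the right one: $\nu_{\Pin}(z)=-(r+1)$ since $\nu_{\Pin}(y)=-(r+1)$ dominates $\nu_{\Pin}(\Ga x)=-r$, the finite zeros are governed by the polynomial $\phi(a)=a^{r+1}+\Ga^r a^r+\Ga a-(\Gg^r+\Gg)$, and the shift $u=a+\Ga^r$ gives exactly $\phi(a)=(a+\Ga^r)^{r+1}-c$ with $c=\Ga^{r+1}+\Gg^r+\Gg$ (using $\Ga^{r^2}=\Ga$ and $\Ga^{r^2+r}=\Ga^{r+1}$), so the dichotomy (i)/(ii) is precisely $c=0$ versus $c\neq0$, i.e.\ $\Gg\in Z_{-\Ga^{r+1}}$ or not. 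Two of your intermediate statements should be tightened: the constant is $\Ga^{r+1}+\Gg^r+\Gg$, not $\Ga^{r+1}+\Gg$, and the separability exclusion is $u=0$, i.e.\ $\Gg^r+\Gg\neq-\Ga^{r+1}$ (your first formulation ``$\Gg^r+\Gg\neq0$'' is the condition for $Z_0$, not $Z_{-\Ga^{r+1}}$; also the derivative of the unshifted $\phi$ is $a^r+\Ga$, so the clean computation is in the variable $u$, where $\phi'=u^r$). Your ``more carefully'' clause states the correct criterion, and in case (i) the degree argument ($\deg(z)_0=r+1$ with support in a single degree-one place) does give multiplicity $r+1$ without any local analysis, which is fine.

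The genuine gap is in case (ii): you assert that the $r+1$ distinct roots of $u^{r+1}=c$ lie in $\F_q$, but separability only gives $r+1$ distinct roots in $\overline{\F}_q$; a priori the zero divisor of $z$ could be supported on places of degree greater than one (conjugate roots bundled together), which would contradict the statement that the $R_i$ are rational points --- so rationality is exactly what must be argued, not a bookkeeping afterthought. The missing step is short but essential: $c=\Ga^{r+1}+(\Gg^r+\Gg)$ is a norm plus a trace from $\F_q=\F_{r^2}$ down to $\F_r$, hence $c\in\F_r^*$ in case (ii); the norm map $u\mapsto u^{r+1}$ maps $\F_q^*$ onto $\F_r^*$, so $u^{r+1}=c$ has a solution $u_0\in\F_q$, and since $r+1\mid q-1$ all $(r+1)$-th roots of unity lie in $\F_q$, so the full solution set $u_0\mu_{r+1}$ consists of $r+1$ distinct elements of $\F_q$. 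Only then do you get $r+1$ distinct rational zeros, and the degree count forces each to be simple, completing (ii). With that inserted (and the constant corrected), your argument is complete.
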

 From Lemma \ref{lem:2.1}, we can show that for every $\F_q$-rational point $P\in\mP$, the divisor $(r+1)(P-\Pin)$ is a principal divisor.
 \begin{cor}\label{cor:2.2}
 For every $\F_q$-rational point $P\in\mP$, the divisor $(r+1)(P-\Pin)$ is a principal divisor.
 \end{cor}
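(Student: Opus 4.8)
The plan is to realize $(r+1)(P-\Pin)$ directly as the principal divisor of a function of the shape $y+\Ga x-\Gg$, by specializing Lemma \ref{lem:2.1}(i). Fix $P=P_{a,b}\in\mP$, so $b^r+b=a^{r+1}$. Lemma \ref{lem:2.1}(i) says that for every $\Ga\in\F_q$ and every $\Gg\in Z_{-\Ga^{r+1}}$ one has $(y+\Ga x-\Gg)=(r+1)\bigl(P_{-\Ga^r,\,\Ga^{r+1}+\Gg}-\Pin\bigr)$. Hence it suffices to exhibit $\Ga,\Gg\in\F_q$ with $\Gg\in Z_{-\Ga^{r+1}}$ for which $P_{-\Ga^r,\,\Ga^{r+1}+\Gg}$ is exactly $P$; then $(r+1)(P-\Pin)$ is the principal divisor $(y+\Ga x-\Gg)$.

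First I would pin down $\Ga$. Matching the first coordinate forces $-\Ga^r=a$. Since $\F_q$ has even characteristic and $q=r^2$, the map $z\mapsto z^r$ is an automorphism of $\F_q$ (here $r$ is a power of the characteristic), and it is an involution because $(z^r)^r=z^{r^2}=z^q=z$; thus there is a unique $\Ga$ with $\Ga^r=-a$, namely $\Ga=(-a)^r$. Matching the second coordinate then forces $\Ga^{r+1}+\Gg=b$, so I set $\Gg:=b-\Ga^{r+1}$. By construction $P_{-\Ga^r,\,\Ga^{r+1}+\Gg}=P_{a,b}=P$.

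The substantive step is verifying the side condition $\Gg\in Z_{-\Ga^{r+1}}$, i.e. $\Gg^r+\Gg=-\Ga^{r+1}$. Using additivity of $z\mapsto z^r$ in characteristic $p$ together with $\Ga^{r^2}=\Ga^q=\Ga$, I compute
\[\Gg^r+\Gg=(b^r+b)-(\Ga^{r+1})^r-\Ga^{r+1}=(b^r+b)-\Ga^{r^2}\Ga^r-\Ga^{r+1}=(b^r+b)-2\Ga^{r+1}.\]
Now I invoke even characteristic: $2\Ga^{r+1}=0$, so the right-hand side equals $b^r+b=a^{r+1}$. Finally, since $\Ga^r=-a$ we get $\Ga^{r+1}=\Ga\cdot\Ga^r=(-a)^r(-a)=(-a)^{r+1}=a^{r+1}$ (again $-1=1$ in $\F_q$), and $-\Ga^{r+1}=\Ga^{r+1}$. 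Hence $\Gg^r+\Gg=a^{r+1}=-\Ga^{r+1}$, as required, and Lemma \ref{lem:2.1}(i) gives $(r+1)(P-\Pin)=(y+\Ga x-\Gg)$, a principal divisor.

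I expect the only delicate point to be this last verification: one has to keep track of the $r$-th-power identities ($\Ga^{r^2}=\Ga$ and additivity of $z\mapsto z^r$) and use the simplifications $2=0$ and $-1=1$ available in characteristic two, so that the hypothesis $\Gg\in Z_{-\Ga^{r+1}}$ of Lemma \ref{lem:2.1}(i) is satisfied exactly. Choosing $\Ga$ and $\Gg$ and matching the two coordinates of $P$ is then routine. (Alternatively, one could count: as $\Ga$ ranges over $\F_q$ and $\Gg$ over the $r$-element set $Z_{-\Ga^{r+1}}$, Lemma \ref{lem:2.1}(i) produces $r^3$ points $P_{-\Ga^r,\,\Ga^{r+1}+\Gg}$, and one would instead have to argue these are pairwise distinct, hence exhaust $\mP$; the explicit construction above avoids that bookkeeping.)
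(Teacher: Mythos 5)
Your proposal is correct and follows essentially the same route as the paper: choose $\Ga$ with $a=-\Ga^r$, set $\Gg=b-\Ga^{r+1}$, check $\Gg\in Z_{-\Ga^{r+1}}$, and apply Lemma \ref{lem:2.1}(i). The only difference is that you spell out the verification of $\Gg^r+\Gg=-\Ga^{r+1}$ (using $\Ga^{r^2}=\Ga$ and characteristic two), which the paper leaves as ``easy to verify.''
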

 \begin{proof}
Let $P=P_{a,b}\in\mP$, then one can find $\Ga\in\F_q$ such that $a=-\Ga^r$. Then $a^{r+1}=(-\Ga^r)^{r+1}$. Put $\Gg=b-\Ga^{r+1}$. Then it is easy to verify that $\Gg\in Z_{-\Ga^{r+1}}$. By Lemma \ref{lem:2.1}(i), this means that $(y+\Ga x-\Gg)=(r+1)(P_{a,b}-\Pin)$.
\end{proof}

  {\bf Genus, differential and Riemann-Roch space:} The genus of $\mathrm{H}$ is $\g:=r(r-1)/2$.

For $m\ge 0$, we consider the Riemann-Roch space $\mL((m-1)\Pin)$ which is given by
\begin{equation}
\mL((m-1)\Pin):=\Span_{\F_q}\{x^iy^j:\; iq+j(q+1)\le m-1\},
\end{equation}
where $\Span_{\F_q}$ stands for $\F_q$-linear span. The $\F_q$-dimension  of $\mL((m-1)\Pin)$ is given by the cardinality of the set $\{(i,j)\in\ZZ^2_{\ge 0}:\; ir+j(r+1)\le m-1, j\le r-1\}$. By  Riemann-Roch Theorem, one has $\dim_{\F_q}\mL((m-1)\Pin)=m-r(r-1)/2$ if $m\ge 2\g$.

The differential $\eta=\frac{dx}{x^q-x}$ gives the canonical divisor $(\eta)=(r^3+2\g-2)\Pin-\sum_{P\in\mP}P$. Furthermore, we have $\res_P(\eta)=1$ for all $P\in\mP$. By \cite{St93}, we have $C_L((m-1)\Pin,\mP)^{\perp}=C_{\GO}((r^3-m+2\g-1)\Pin,\mP)$ for any $0\le m\le r^3+2\g-2$.

Let $n\le r^3$ and fix a subset $\mS=\{P_1,P_2,\dots,P_n\}$ of $\mP$. We consider local repairing of the Hermitian code $C_L(m\Pin,\mS)$.
We also fix three positive integers $d, m$ satisfying $m\le d\le n-1$.

\begin{theorem}\label{thm:4.5} If $m\le d-(p^l-1)(r+1)$, then $C_L((m-1)\Pin,\mS)$ is a strong $[n,m,d]$-regenerating code with the secondary parameters $(\mathfrak{b}=\log q, M, B)$, where $M=\ell((m-1)\Pin)\log q$ and bandwidth
\begin{equation}B=d(\log q-l\log p).
\end{equation}
\end{theorem}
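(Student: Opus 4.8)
The plan is to re-run the construction from the proof of Theorem~\ref{thm:3.3}, exploiting the special feature of the Hermitian curve recorded in Corollary~\ref{cor:2.2}: for every $P\in\mS$ the divisor $(r+1)(P-\Pin)$ is principal. In the general setting the auxiliary function $h_i$ could vanish at up to $\g$ places besides $P_i$, which is exactly what produced the exceptional set $I_i$ and forced a merely \emph{weak} code in Theorem~\ref{thm:3.3}. Here, for each $i\in[n]$, I would instead use Corollary~\ref{cor:2.2} to pick a function $h_i$ with $(h_i)=(r+1)(P_i-\Pin)$ --- so its \emph{only} zero is at $P_i$ and its only pole is at $\Pin$, of order $r+1$. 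This has two effects: it makes $I_i$ empty (hence a \emph{strong} code, with equal download from every contacted node), and it keeps the pole order at $\Pin$ small, which both relaxes the constraint on $m$ and reduces the bandwidth relative to Theorem~\ref{thm:3.3}.

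Concretely: fix an $\F_p$-subspace $V\subseteq\F_q$ with $\dim_{\F_p}V=l$, let $L_V(x)=\prod_{\Ga\in V}(x-\Ga)$ be the associated $p$-linearized polynomial (so $\Im(L_V)$ has $\F_p$-dimension $\log_p q-l$), and set $h_{(i,u)}:=L_V(\zeta_u h_i)/h_i$ for $u=1,\dots,t$, exactly as in \eqref{eq:8}. The divisor bookkeeping I would carry out is: $\zeta_u h_i$ has a zero of order $r+1$ at $P_i$, a pole of order $r+1$ at $\Pin$, and nothing else; since $0\in V$, $L_V(\zeta_u h_i)$ vanishes to order exactly $r+1$ at $P_i$ and has a pole of order $p^l(r+1)$ at $\Pin$; dividing by $h_i$ therefore gives $h_{(i,u)}\in\mL((p^l-1)(r+1)\Pin)$, with $h_{(i,u)}$ regular at $P_i$ and $h_{(i,u)}(P_i)=\zeta_u$ (as in the proof of Theorem~\ref{thm:3.3}), and regular at every $P_j$ with $j\neq i$ because $h_i(P_j)\neq0$.

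The decisive gain is that now, for \emph{every} $j\in[n]\setminus\{i\}$,
\[
\Span_{\F_p}\{h_{(i,u)}(P_j):u=1,\dots,t\}\subseteq \frac{1}{h_i(P_j)}\,\Im(L_V),
\]
so $b_j^{(i)}\le\log_p q-l$ with no exceptional nodes. Since $h_{(i,u)}\in\mL((p^l-1)(r+1)\Pin)$, Lemma~\ref{lem:3.2} applies as soon as $m\le d-(p^l-1)(r+1)$ --- precisely the hypothesis --- and yields that $C_L((m-1)\Pin,\mS)$ is an $[n,m,d]$-regenerating code with $M=\ell((m-1)\Pin)\log q$ and total download $\sum_{j\in S}b_j^{(i)}\log p\le d(\log q-l\log p)$. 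Because $b_j^{(i)}\le\log_p q-l$ holds uniformly over $i$, $j$ and $S$, one downloads exactly $\log q-l\log p$ bits from each of the $d$ contacted nodes (padding those few with $b_j^{(i)}<\log_p q-l$), which upgrades the code to \emph{strong} with bandwidth $B=d(\log q-l\log p)$.

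I do not expect a genuine obstacle here: the entire substance is the principality statement of Corollary~\ref{cor:2.2}, and the only step requiring a little care is confirming that $h_{(i,u)}$ has pole order exactly $(p^l-1)(r+1)$ at $\Pin$ and is regular (and nonzero) everywhere else, in particular at $P_i$. Once that is in hand, the recovery of $f(P_i)$ from the downloaded traces is word-for-word the argument in the proof of Lemma~\ref{lem:3.2}.
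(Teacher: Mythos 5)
Your proposal is correct and follows essentially the same route as the paper: choose $h_i$ with $(h_i)=(r+1)(P_i-\Pin)$ via Lemma~\ref{lem:2.1} (equivalently Corollary~\ref{cor:2.2}), set $h_{(i,u)}=L_V(\zeta_u h_i)/h_i\in\mL((p^l-1)(r+1)\Pin)\subseteq\mL((d-m)\Pin)$, bound $b_j^{(i)}\le\log_p q-l$ uniformly since $h_i(P_j)\neq0$ for all $j\neq i$, and invoke the repair scheme of Lemma~\ref{lem:3.2}/Theorem~\ref{thm:3.3}. The only cosmetic difference is your explicit remark that the emptiness of $I_i$ and uniform per-node download give the \emph{strong} property, which the paper leaves implicit.
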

\begin{proof} Choose an $\F_p$-subspace $V$ of $\F_q$ of dimension $l$. Consider the $\F_p$-linear map $L_V(x)=\prod_{\Ga\in V}(x-\Ga)$ of $\F_q$ defined in Subsection \ref{subsec:2.1}. For each $i\in[n]$, by Lemma \ref{lem:2.1} one can find a function $h_i\in \mL((r+1)\Pin)$ such that $(h_i)=(r+1)(P_i-\Pin)$. Put
\begin{equation}
h_{(i,u)}:=\frac{L_V(\zeta_u h_i)}{h_i}.
\end{equation}
It is clear that $h_{(i,u)}$ is a nonzero function in $\mL((p^l-1)(r+1)\Pin)\subseteq\mL((d-m)\Pin)$.

For $j\in[n]\setminus\{i\}$, we have
\[\Span_{\F_p}\{h_{(i,u)}(P_j):\; u=1,2,\dots,t\}=\Span_{\F_p}\left\{\frac{L_V(\zeta_u h_i(P_j))}{h_i(P_j)}:\; u=1,2,\dots,t\right\}\subseteq \frac{1}{h_i(P_j)}L_V(\F_q).\]
Thus, $b_j^{(i)}\le\dim_{\F_p}(L_V(\F_q))=\log_p q-l$.
The desired result follows from Theorem \ref{thm:3.3}.
\end{proof}

Finally, we consider the Hermitian code $C_L((m-1)\Pin,\mP)$ with $n=r^3$. We label elements of $\mP$ as $\mP=\{P_1,P_2,\dots,P_n\}$. Consider the differential $\eta=\frac{dx}{x^q-x}$. Then $\eta\in\GO((n+2\g-2)\Pin-\sum_{i=1}^nP_i)$. Moreover, $\res_{P_i}(\eta)=1$ for all $i\in[n]$. Since the Euclidean dual of $C_L((n+2\g-2)\Pin, \mP)$ is $C_{\GO}((n+2\g-2)\Pin, \mP)$, we have that for every $g\in\mL((n+2\g-2)\Pin)$
\begin{equation}\sum_{i=1}^ng(P_i)=\sum_{i=1}^n\res_{P_i}(\eta)g(P_i)=0.\end{equation}

\begin{theorem}\label{thm:4.6} If $m\le n+r(r-1)-2-(p^l-1)(r+1)$, then $C_L(m\Pin,\mP)$ is a strong $[n,m, n-1]$-regenerating code with the secondary parameters $(\mathfrak{b}=\log q, M, B)$, where $M=\ell((m-1)\Pin)\log q$ and bandwidth
\begin{equation}B=(n-1)(\log q-l\log p).
\end{equation}
\end{theorem}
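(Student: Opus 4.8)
The plan is to run the general repairing scheme of Lemma~\ref{lem:3.2} and Theorem~\ref{thm:3.3} with the download set $S=[n]\setminus\{i\}$, but to replace the appeal to Proposition~\ref{prop:2.2} by the \emph{explicit} canonical differential $\eta=\frac{dx}{x^q-x}$. Since $(\eta)=(n+2\g-2)\Pin-\sum_{i=1}^nP_i$ and $\res_{P_i}(\eta)=1$ for every $i$, one already has (as recorded immediately before the statement) that $\sum_{i=1}^ng(P_i)=0$ for all $g\in\mL((n+2\g-2)\Pin)$. This single ``all-ones'' parity check of the dual code is what lets us recover one coordinate from the other $n-1$, and it is also the reason the designed dimension here can be larger by about $2\g$ than what Proposition~\ref{prop:2.2} alone permits (compare Theorem~\ref{thm:4.5}, where one only gets $m\le (n-1)-(p^l-1)(r+1)$ from a generic differential on a proper subset).

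First I would build the repairing functions exactly as in Theorem~\ref{thm:4.5}. Fix an $\F_p$-subspace $V\subseteq\F_q$ with $\dim_{\F_p}V=l$ and its $p$-linearized map $L_V(x)=\prod_{\Ga\in V}(x-\Ga)$, whose image has $\F_p$-dimension $\log_pq-l$. For each $i\in[n]$, use Corollary~\ref{cor:2.2} to choose $h_i\in\mL((r+1)\Pin)$ with $(h_i)=(r+1)(P_i-\Pin)$, and set $h_{(i,u)}:=L_V(\zeta_uh_i)/h_i$. Then $h_{(i,u)}\in\mL((p^l-1)(r+1)\Pin)$, and $h_{(i,u)}(P_i)$ equals a fixed nonzero scalar times $\zeta_u$, so $\{h_{(i,u)}(P_i):u\in[t]\}$ is an $\F_p$-basis of $\F_q$. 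The point special to the Hermitian curve is that $h_i$ has no zero other than $P_i$: hence $h_i(P_j)\ne0$ for every $j\ne i$, and
\[
\Span_{\F_p}\{h_{(i,u)}(P_j):u\in[t]\}=\frac{1}{h_i(P_j)}\,L_V(\F_q),
\]
which has $\F_p$-dimension \emph{exactly} $\log_pq-l$, the same for all $j\ne i$. Thus $b_j^{(i)}=\log_pq-l$ uniformly, the repair reads the same $(\log_pq-l)\log p=\log q-l\log p$ bits from each of the $n-1$ live nodes, and this is precisely what makes the code \emph{strong}, with bandwidth $B=(n-1)(\log q-l\log p)$.

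Next I would carry out the repair. The hypothesis $m\le n+r(r-1)-2-(p^l-1)(r+1)$ forces $(p^l-1)(r+1)+m-1\le n+2\g-2$, so for any codeword $(f(P_1),\dots,f(P_n))\in C_L((m-1)\Pin,\mP)$ with $f\in\mL((m-1)\Pin)$ each product $h_{(i,u)}f$ lies in $\mL((n+2\g-2)\Pin)$; therefore $\sum_{j=1}^nh_{(i,u)}(P_j)f(P_j)=0$, i.e.
\[
\Tr\big(h_{(i,u)}(P_i)f(P_i)\big)=-\sum_{j\ne i}\Tr\big(h_{(i,u)}(P_j)f(P_j)\big),\qquad u=1,\dots,t.
\]
For each $j\ne i$ fix $J_j\subseteq[t]$ with $|J_j|=\log_pq-l$ such that $\{h_{(i,v)}(P_j):v\in J_j\}$ is an $\F_p$-basis of $\Span_{\F_p}\{h_{(i,u)}(P_j):u\in[t]\}$, and download the bits $\Tr\big(h_{(i,v)}(P_j)f(P_j)\big)$, $v\in J_j$, from the node holding $f(P_j)$. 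Writing each $h_{(i,u)}(P_j)$ over that basis recovers $\Tr(h_{(i,u)}(P_j)f(P_j))$ for all $u$, hence $\Tr(h_{(i,u)}(P_i)f(P_i))$ for all $u$ from the identity above; since $\{h_{(i,u)}(P_i)\}_u$ is an $\F_p$-basis of $\F_q$, this determines $f(P_i)$ via the trace representation \eqref{eq:2}. Finally $\mathfrak{b}=\log q$ and $M=\log|C|=\dim_{\F_q}\!\big(C_L((m-1)\Pin,\mP)\big)\log q=\ell((m-1)\Pin)\log q$ by Proposition~\ref{prop:2.1}(ii), and the requirement that any $m$ coordinates repair $f(P_i)$ is just the $(n-\Gd+1)$-reconstruction property established in Lemma~\ref{lem:3.1}.

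The only step calling for care is the divisor bookkeeping: one must verify at the same time that $h_{(i,u)}f\in\mL((n+2\g-2)\Pin)$, so that the all-ones parity relation $\sum_jh_{(i,u)}(P_j)f(P_j)=0$ is available, and that $h_i$ introduces no zeros away from $P_i$, so that $b_j^{(i)}$ is the constant $\log_pq-l$. The hypothesis on $m$ is tuned exactly for the first, and Corollary~\ref{cor:2.2} (via Lemma~\ref{lem:2.1}) gives the second; the rest is the Hermitian specialization of Theorem~\ref{thm:3.3}.
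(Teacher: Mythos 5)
Your proposal is correct and follows essentially the same route as the paper's own proof: the same repair functions $h_{(i,u)}=L_V(\zeta_u h_i)/h_i$ with $h_i$ supplied by Lemma~\ref{lem:2.1}/Corollary~\ref{cor:2.2}, the same use of the explicit differential $\eta=\frac{dx}{x^q-x}$ to get the all-ones parity relation on $\mL((n+2\g-2)\Pin)$, and the same bandwidth count $b_j^{(i)}=\log_pq-l$ for every $j\neq i$. If anything, you are slightly more careful than the paper (noting that $h_{(i,u)}(P_i)$ is a fixed nonzero scalar multiple of $\zeta_u$ and checking the reconstruction condition explicitly), but the argument is the same.
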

\begin{proof} Choose an $\F_p$-subspace $V$ of $\F_q$ of dimension $l$. Consider the $\F_p$-linear map $L_V(x)=\prod_{\Ga\in V}(x-\Ga)$ of $\F_q$ defined in Subsection \ref{subsec:2.1}. For each $i\in[n]$, by Lemma \ref{lem:2.1} one can find a function $h_i\in \mL((r+1)\Pin)$ such that $(h_i)=(r+1)(P_i-\Pin)$. Put
\begin{equation}
h_{(i,u)}:=\frac{L_V(\zeta_u h_i)}{h_i}.
\end{equation}
It is clear that $h_{(i,u)}$ is a nonzero function in $\mL((p^l-1)(r+1)\Pin)\subseteq\mL((n+2\g-2-m)\Pin)$.

Now for any $(f(P_1),f(P_2),\dots,f(P_n))\in C_L(m\Pin,\mP)$, the function $fh_{(i,u)}$ belongs to $\mL((n+2\g-2)\Pin)$. To repair $f(P_i)$, we make use of the following identities
\begin{equation}
\zeta_uf(P_i)=(fh_{(i,u)})(P_i)=-\sum_{1\le j\le n; j\neq i}(fh_{(i,u)})(P_j)=-\sum_{1\le j\le n; j\neq i}h_{(i,u)}(P_j)f(P_j).
\end{equation}

For $j\in[n]\setminus\{i\}$, we have
\[\Span_{\F_p}\{h_{(i,u)}(P_j):\; u=1,2,\dots,t\}=\Span_{\F_p}\left\{\frac{L_V((\zeta_u) h_i(P_j))}{h_i(P_j)}:\; u=1,2,\dots,t\right\}\subseteq \frac{1}{h_i(P_j)}L_V(\F_q).\]
Thus, $b_j^{(i)}\le\dim_{\F_p}(L_V(\F_q))=\log_p q-l$.
The desired result follows from Theorem \ref{thm:3.3}.
\end{proof}

\begin{rmk} Compared with Theorem \ref{thm:4.5}, the bound on code dimension in Theorem \ref{thm:4.6} is less restricted.
\end{rmk}

Finally we give a small example for repairing Hermitian codes and compare them with RS codes.

\begin{ex}{\rm Consider the Hermitian function field with $q=2^6$ and $n=2^9=512$. Then the genus is $\g=28$. Let $m=476$. Then the dimension of the code $C_L((m-1)\Pin,\mP)$ is $m-\g=448$. Thus, the rate of the code is $R=\frac78=1-\frac18$.  Let $p=8$ and $l=1$. We obtain the repairing bandwidth equal $(n-1)(\log q-\log p)=3(n-1)$.

Now we consider a Reed-Solomon code over $\F_{512}$ of length $512$. By  Guruswami-Wootters' result \cite{GW16}, with the same rate $1-\frac 18$, the repairing bandwidth is $(n-1)\log p=3(n-1)$. Thus, both the codes have the same rate and repairing bandwidth. However, the Hermitian code is defined over a smaller alphabet.

}\end{ex}

\end{document}